\begin{document}

\newcommand{\refmap}{\curlyvee}
\newcommand{\corefmap}{\curlywedge}
\newcommand{\ELIMINE}[1]{}

\mainmatter  %
\title{Morse Sequences on Stacks and Flooding Sequences}

\titlerunning{Morse Sequences on Stacks and Flooding Sequences}

\author{Gilles Bertrand}
%
\authorrunning{G. Bertrand}
%
\institute{Univ Gustave Eiffel, CNRS, LIGM, F-77454 Marne-la-Vallée, France 
\email{gilles.bertrand@esiee.fr}
}

\index{Bertrand, Gilles}

\maketitle

\newcommand{\bbbu}{\; \ddot{\cup} \;}
\newcommand{\axr}[1]{\ddot{\textsc{#1}}  \normalsize}

\newcommand{\axcup}{\textsc{C\tiny{UP}} }
\newcommand{\axcap}{\textsc{C\tiny{AP}} }
\newcommand{\axunion}{\textsc{U\tiny{NION}} }
\newcommand{\axinter}{\textsc{I\tiny{NTER}} }

\newcommand{\bb}[1]{\mathbb{#1}}
\newcommand{\ca}[1]{\mathcal{#1}}
\newcommand{\ax}[1]{\textsc{#1} \normalsize}

\newcommand{\axb}[2]{\ddot{\textsc{#1}} \textsc{\tiny{#2}}  \normalsize}
\newcommand{\bbb}[1]{\ddot{\mathbb{#1}}}
\newcommand{\cab}[1]{\ddot{\mathcal{#1}}}

\newcommand{\rel}[1]{\scriptstyle{\mathbf{#1}}}

\newcommand{\rela}[1]{\textsc{\scriptsize{\bf{{#1}}}} \normalsize}

\newcommand{\de}[2]{#1[#2]}
\newcommand{\di}[2]{#1\langle #2 \rangle}

\newcommand{\la}{\langle}
\newcommand{\ra}{\rangle}
\newcommand{\hs}{\hspace*{\fill}}

\newcommand{\cell}{\mathbb{C}}
\newcommand{\cellp}{\mathbb{C}^\times}
\newcommand{\simp}{\mathbb{S}}
\newcommand{\comp}{\mathbb{H}}
\newcommand{\simpp}{\mathbb{S}^\times}
\newcommand{\den}{\mathbb{D}\mathrm{en}}
\newcommand{\ram}{\mathbb{R}\mathrm{am}}
\newcommand{\tree}{\mathbb{T}\mathrm{ree}}
\newcommand{\graph}{\mathbb{G}\mathrm{raph}}
\newcommand{\vertex}{\mathbb{V}\mathrm{ert}}
\newcommand{\edge}{\mathbb{E}\mathrm{dge}}
\newcommand{\equ}{\mathbb{E}\mathrm{qu}}
\newcommand{\esub}{\mathbb{E}\mathrm{sub}}

\newcommand{\topp}{\langle \mathrm{K} \rangle}
\newcommand{\topq}{\langle \mathrm{Q} \rangle}
\newcommand{\topxp}{\langle \mathbb{X}, \mathrm{P} \rangle}
\newcommand{\topxq}{\langle \mathbb{X},\mathrm{Q} \rangle}

\newcommand{\vt}{\mathcal{K}}
\newcommand{\vtp}{\mathcal{T'}}
\newcommand{\vtpp}{\mathcal{T''}}
\newcommand{\vq}{\mathcal{Q}}
\newcommand{\vqp}{\mathcal{Q'}}
\newcommand{\vqpp}{\mathcal{Q''}}
\newcommand{\vk}{\mathcal{K}}
\newcommand{\vkp}{\mathcal{K'}}
\newcommand{\vkpp}{\mathcal{K''}}

\newcommand{\C}{\ensuremath{\searrow^{\!\!\!\!\!C}}}
\newcommand{\Detach}{\ensuremath{\;\oslash\;}}

\newcommand{\mh}[1]{\hat{\textsl{#1}}}

\newcommand{\sig}{\sigma}
\newcommand{\del}{W}

\newcommand{\ms}{\overrightarrow{W}}
\newcommand{\msp}{\overrightarrow{W'}}
\newcommand{\mspp}{\overrightarrow{W_p}}

\newcommand{\msi}{\overrightarrow{W_{i}}}
\newcommand{\msim}{\overrightarrow{W_{i-1}}}
\newcommand{\mss}{\widehat{W}}
\newcommand{\msc}{\ddot{W}}
\newcommand{\mso}{\overleftarrow{W}}

\newcommand{\ov}{\overline}
\newcommand{\un}{\underline}
\newcommand{\ha}{\widehat}
\newcommand{\ti}{\widetilde}
\newcommand{\dd}{\ddot}

\newcommand{\ka}{\kappa}
\newcommand{\ova}{\overrightarrow{w}}


\begin{abstract}

This paper builds upon the framework of \emph{Morse sequences}, a simple and effective approach to discrete Morse theory. A Morse sequence on a simplicial complex consists of a sequence of nested subcomplexes generated by expansions and fillings—two operations originally introduced by Whitehead. Expansions preserve homotopy, while fillings introduce critical simplexes that capture essential topological features.

We extend the notion of Morse sequences to \emph{stacks}, which are monotonic functions defined on simplicial complexes,
and define \emph{Morse sequences on stacks}
 as those whose expansions preserve the homotopy of all sublevel sets. This extension leads to a generalization of the fundamental collapse 
 theorem to weighted simplicial complexes. Within this framework, we focus on a refined class of sequences called \emph{flooding sequences}, 
 which exhibit an ordering behavior similar to that of classical watershed algorithms. Although not every Morse sequence on a stack is a flooding sequence, 
 we show that the gradient vector field associated with any Morse sequence can be recovered through a flooding sequence.

Finally, we present algorithmic schemes for computing flooding sequences using cosimplicial complexes.

\keywords{Discrete Morse theory \and Expansions and collapses \and Fillings and perforations \and Simplicial complex.}
\end{abstract}

\section{Introduction}

This work is set within the framework of \emph{Morse sequences}~\cite{bertrand2025morse} offering a simple approach to discrete Morse theory~\cite{For98a}.
If $L$ is a subcompex of a simplicial complex $K$, a Morse sequence from $L$ to $K$  is a sequence 
$\ms = \langle L = K_0,\ldots,K_k =K \rangle$ of nested simplicial complexes
such that, for each $i \in [1,k]$,
$K_i$ is either an \emph{expansion} or a \emph{filling} of $K_{i-1}$. 
These two elementary  operations were introduced in \cite{Whi39}: 
expansions (the inverse of collapses) preserve homotopy, while fillings introduce critical simplices. 
The set of all expansions of the sequence induces a  gradient vector field between $L$ and $K$, while the set of all critical simplexes 
captures fundamental homological properties of the pair $(L,K)$.

In this paper, we extend these notions by exploring Morse sequences constructed on stacks defined on simplicial complexes (Section \ref{sec:Fsequence}).
A \emph{stack} on a simplicial complex $K$ is a monotonic function $F$ assigning a value $F(\nu)$ to each simplex $\nu \in K$. 
A \emph{Morse sequence on $F$} is a Morse sequence on $K$ in which the expansions preserve the homotopy of all sublevel sets  of $F$. 
This naturally generalizes the fundamental collapse theorem of discrete Morse theory to weighted simplicial complexes.

We introduce and study a refined class of Morse sequences on stacks, called \emph{flooding sequences} (Section \ref{sec:flood}). 
These sequences satisfy an ordering condition analogous to those found in some classical  watershed algorithms.
A Morse sequence on a stack is not necessarily  a flooding sequence. 
Nevertheless, we prove that the gradient vector field of an arbitrary Morse sequence on a stack $F$
can be obtained by computing a flooding sequence on $F$.

Thanks to the notion of a \emph{cosimplicial complex} (a complex which is the difference of two simplicial complexes), we derive simple schemes for computing flooding sequences (Section \ref{sec:floodcosim} and \ref{sec:compflood}). In particular, we consider
a \emph{maximal increasing scheme} (building $\ms$ from $L$ to $K$ by prioritizing expansions), and a \emph{minimal decreasing scheme} (building
$\ms$ from $K$ to $L$ by prioritizing collapses).

These methods follow a propagation paradigm aimed at minimizing the number of critical simplexes.
We show that they yield efficient algorithms for computing flooding sequences.

\section{Basic definitions}
\label{sec:cosimplicial}

\subsection{Simplicial and cosimplicial complexes}

In this section, we introduce \emph{cosimplicial complexes}, which generalize the classical notion of simplicial complexes. 
In the context of combinatorial dynamics, these structures are also referred to as \emph{multivectors} \cite{Dey22}.
In this paper, we employ  cosimplicial complexes to construct the sequences of simplicial complexes that form
Morse sequences.

By a {\it simplex}, we mean a finite non-empty set. 
The {\it dimension} of a simplex~$\nu$, written $dim(\nu)$, is the number of its elements
minus one.

\begin{definition} \label{def:cosim0}
Let $S$ be a finite set of simplexes. The set $S$ is a {\em cosimplicial complex} if,
for any $\sig,\tau \in S$, we have $\nu \in S$ whenever $\sig \subseteq \nu \subseteq \tau$. 
\end{definition} 
\noindent

A simplex of a cosimplicial complex $S$ is {\it a face of~$S$}.
A face of $S$ is a {\em facet of~$S$} if it is maximal for inclusion, and a {\em cofacet of~$S$} if it is minimal for inclusion.

A finite set $K$  
of simplexes is a {\it simplicial complex} if,
for each $\tau \in K$, we have $\sig \in K$ whenever $\sig \subseteq \tau$ and $\sig \not= \emptyset$. 
Thus, each simplicial complex is also a cosimplicial complex. \\

Let $K$ be a simplicial complex and let $S \subseteq K$. We say that $S$ is \emph{closed for $K$} if $S$ is a simplicial complex.
We say that $S$ is \emph{open for $K$} if, for each $\sig \in S$, we have $\tau \in S$ whenever $\sig \subseteq \tau$ and $\tau \in K$. 
Therefore, a set $S \subseteq K$ is closed for $K$ if and only if $K \setminus S$ is open for $K$.
The family $\mathcal{O}_K$ composed of all sets which are open for $K$ satisfies the axioms of a topology on $K$. 
So we obtain the structure of a finite topological space, see \cite{Alex37}, \cite{Bar11}. We easily get the following.

\begin{proposition} \label{prop:cloopen}
Let $K$ be a simplicial complex and let $S \subseteq K$. 
The set $S$ is a cosimplicial complex if and only if 
$S$ is the intersection of an open set for $K$ and a closed set for $K$. 
\end{proposition}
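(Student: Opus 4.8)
The plan is to prove the two implications separately, relying only on the set-theoretic characterizations of ``open for $K$'' and ``closed for $K$'' recorded just above, together with the fact that $K$, being a simplicial complex, is downward closed.

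For the direction where $S = O \cap C$ with $O$ open for $K$ and $C$ closed for $K$, I would argue directly from the definition of a cosimplicial complex. Take $\sigma, \tau \in S$ and any simplex $\nu$ with $\sigma \subseteq \nu \subseteq \tau$; the goal is $\nu \in S$. First note $\nu \in K$: since $\tau \in K$ and $K$ is a simplicial complex, every nonempty subset of $\tau$ lies in $K$, and $\nu \supseteq \sigma \neq \emptyset$. Now $\sigma \in O$ together with $\sigma \subseteq \nu \in K$ forces $\nu \in O$ by openness, while $\tau \in C$ together with $\nu \subseteq \tau$ forces $\nu \in C$ by closedness. Hence $\nu \in O \cap C = S$, so $S$ is a cosimplicial complex.

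For the converse, I would assume $S$ is a cosimplicial complex and exhibit the required sets explicitly as the downward and upward closures of $S$ inside $K$, namely $C = \{\nu \in K : \nu \subseteq \tau \text{ for some } \tau \in S\}$ and $O = \{\nu \in K : \sigma \subseteq \nu \text{ for some } \sigma \in S\}$. A routine check shows $C$ is a simplicial complex, hence closed for $K$, and $O$ is open for $K$. Taking $\sigma = \nu$ and $\tau = \nu$ respectively gives $S \subseteq O$ and $S \subseteq C$, so $S \subseteq O \cap C$.

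The only step that actually uses the cosimplicial hypothesis is the reverse inclusion $O \cap C \subseteq S$, and this is the crux of the argument. Given $\nu \in O \cap C$, membership in $O$ yields some $\sigma \in S$ with $\sigma \subseteq \nu$, and membership in $C$ yields some $\tau \in S$ with $\nu \subseteq \tau$; thus $\sigma \subseteq \nu \subseteq \tau$ with both endpoints in $S$, and the defining property of a cosimplicial complex gives $\nu \in S$. Combining the inclusions yields $S = O \cap C$. I do not anticipate any genuine obstacle: the whole proof is bookkeeping with the three closure properties, and the one substantive point—sandwiching $\nu$ between two elements of $S$—is precisely the definition of a cosimplicial complex being unwound.
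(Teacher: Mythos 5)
Your proof is correct and complete; the paper itself omits the argument (stating only ``We easily get the following''), and your two directions --- verifying the sandwich condition directly for $O\cap C$, and exhibiting the upward and downward closures of $S$ inside $K$ as the witnesses $O$ and $C$ --- are exactly the standard argument the paper intends, consistent with the closure operator $\overline{S}$ and Proposition~\ref{prop:cosim0} introduced afterwards. No gaps.
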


In general topology, a subset which satisfies the condition of Proposition \ref{prop:cloopen} is said to be  \emph{locally closed}, see \cite{bourbaki2013general},
Ch.1, §3, no3.

Let $R$ and $T$ be two subsets of $K$. We have $S = R \cap T$
if and only if $S = R \setminus R'$, with  $R' = K \setminus T$. Thus, by Proposition \ref{prop:cloopen} and by the properties of the sets which are open or closed for $K$,
we obtain the following.

 \begin{proposition} \label{prop:bourbaki}
Let $K$ be a simplicial complex and let $S \subseteq K$. 
\begin{itemize}[topsep=0cm] 
\item $S$ is a cosimplicial complex iff $S$ is the difference of two open sets for $K$.
\item $S$ is a cosimplicial complex iff $S$ is the difference of two closed sets for $K$.
\end{itemize}
\end{proposition}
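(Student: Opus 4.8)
The plan is to obtain both equivalences as immediate consequences of Proposition \ref{prop:cloopen}, combined with the open/closed duality recalled earlier (a set is closed for $K$ exactly when its complement in $K$ is open for $K$) and the elementary rewriting identity stated just above the proposition: for $R, T \subseteq K$ one has $S = R \cap T$ if and only if $S = R \setminus R'$ where $R' = K \setminus T$. The whole argument is a matter of applying this identity once to the closed factor and once to the open factor of the intersection given by Proposition \ref{prop:cloopen}.

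For the first bullet, I would start from Proposition \ref{prop:cloopen}, which states that $S$ is a cosimplicial complex if and only if $S = O \cap C$ for some $O$ open for $K$ and some $C$ closed for $K$. Setting $O' = K \setminus C$, duality makes $O'$ open for $K$, and the rewriting identity turns $S = O \cap C$ into $S = O \setminus O'$, a difference of two open sets. Conversely, given $S = O \setminus O'$ with $O, O'$ open for $K$, I would let $C = K \setminus O'$, which is closed by duality, and run the identity backwards to recover $S = O \cap C$; Proposition \ref{prop:cloopen} then yields that $S$ is cosimplicial.

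The second bullet is entirely symmetric, so I would simply exchange the roles of the two factors. Instead of rewriting the closed factor, I would write the open factor as $O = K \setminus C'$ with $C' = K \setminus O$ closed by duality, so that $S = O \cap C$ becomes $S = C \setminus C'$, a difference of two closed sets; reversing this construction handles the converse in the same way.

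There is no genuine obstacle here: the only point meriting a line of verification is that the difference-versus-intersection rewriting is legitimate, i.e. that $O \setminus O' = O \cap (K \setminus O')$ (and its closed analogue) holds. This is immediate from $O \subseteq K$, since intersecting a subset of $K$ with $K$ leaves it unchanged. Once this is noted, each bullet reduces to a two-step combination of Proposition \ref{prop:cloopen} with the open/closed duality, and the forward and backward directions are mirror images of one another.
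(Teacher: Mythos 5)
Your proof is correct and follows exactly the route the paper intends: it combines Proposition \ref{prop:cloopen} with the open/closed complementation duality and the rewriting identity $S = R \cap T \iff S = R \setminus (K \setminus T)$ stated just before the proposition. The paper leaves these two applications of the identity implicit, and your write-up simply makes them explicit, so no further comment is needed.
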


 Since a closed set for $K$ is a simplicial complex in its own right, we derive:
 
  \begin{proposition} \label{prop:bourbaki1}
  A set $S$ is a cosimplicial complex if and only if $S$ is the difference of two simplicial complexes.
\end{proposition}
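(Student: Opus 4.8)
The plan is to derive Proposition \ref{prop:bourbaki1} directly from Proposition \ref{prop:bourbaki}, which I am allowed to assume. The key observation, already flagged in the sentence immediately preceding the statement, is that a set which is closed for $K$ is itself a simplicial complex in its own right. So the strategy is to use the second bullet of Proposition \ref{prop:bourbaki}, characterizing cosimplicial complexes as differences of two closed sets for $K$, and then reinterpret ``closed set for $K$'' as ``simplicial complex.''

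First I would handle the forward direction. Suppose $S$ is a cosimplicial complex. A priori this is a purely intrinsic property of $S$ (from Definition \ref{def:cosim0}), so to apply Proposition \ref{prop:bourbaki} I must first embed $S$ inside a simplicial complex $K$. The natural choice is to take $K$ to be the smallest simplicial complex containing $S$, namely the simplicial closure $K = \{\sig : \emptyset \neq \sig \subseteq \tau \text{ for some } \tau \in S\}$; then $S \subseteq K$ and Proposition \ref{prop:bourbaki} applies. Its second bullet gives $S = A \setminus B$ where $A$ and $B$ are closed for $K$, i.e.\ simplicial complexes. This yields $S$ as the difference of two simplicial complexes, completing this direction.

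For the converse, suppose $S = A \setminus B$ with $A$ and $B$ simplicial complexes. I would set $K = A$ (or $K = A \cup B$ to be safe), so that $A$ is closed for $K$ and $B \cap K$ is closed for $K$, and $S = A \setminus B = A \setminus (B \cap A)$ is the difference of two closed sets for $K$. Applying the second bullet of Proposition \ref{prop:bourbaki} in the reverse direction then shows $S$ is a cosimplicial complex. One small point of care: the difference $A \setminus B$ might equal $A \setminus (A \cap B)$, so I would replace $B$ by $A \cap B$, which is still a simplicial complex (intersection of two simplicial complexes is a simplicial complex), to make the subtracted set lie inside $K$.

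The main obstacle, such as it is, is entirely bookkeeping rather than conceptual: the statement is phrased intrinsically, without reference to an ambient $K$, whereas Proposition \ref{prop:bourbaki} is stated relative to a fixed $K$. The real content is simply the identification ``$B$ closed for $K$ $\iff$ $B$ is a simplicial complex,'' which is already established in the excerpt. So the proof is essentially a one-line invocation of Proposition \ref{prop:bourbaki} together with this identification, and the only thing requiring attention is choosing the ambient complex $K$ correctly in each direction so that the hypotheses of Proposition \ref{prop:bourbaki} are met.
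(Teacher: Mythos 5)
Your proof is correct and follows essentially the same route as the paper, which derives Proposition \ref{prop:bourbaki1} in one line from the second bullet of Proposition \ref{prop:bourbaki} together with the observation that a closed set for $K$ is a simplicial complex in its own right. Your extra care in choosing the ambient complex ($K=\overline{S}$ in one direction, $K=A$ with $B$ replaced by $A\cap B$ in the other) just makes explicit the bookkeeping the paper leaves implicit.
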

 
 \subsection{An ambient space}

In \cite{Knu24}, an \emph{open simplicial complex} is defined precisely as the difference of two simplicial complexes.
Consequently, by Proposition \ref{prop:bourbaki1}, cosimplicial complexes and open simplicial complexes are equivalent notions.
In this paper, we will take advantage of the local characterization of Definition \ref{def:cosim0}
for deriving some operators which act directly on a cosimplicial complex.

Let $S$ be a finite set of simplexes. 
Let $\overline{S}$ be the set of simplexes such that
$\sig \in \overline{S}$ if and only if there exists $\tau \in S$ with $\sig \subseteq \tau$. 
The set $\overline{S}$ is the \emph{closure of $S$}. 
In the sequel of this paper, we write $\underline{S} = \overline{S} \setminus S$.

We have $S \subseteq \overline{S}$ and 
we observe that
$\overline{S}$ is a simplicial complex, this complex is the smallest simplicial complex that contains $S$.
It follows that $S$ is a simplicial complex if and only if $\overline{S} = S$.
Also, the set $\underline{S}$ is the smallest set such that $S \cup \underline{S}$ is a simplicial complex.

By the previous remarks, the simplicial complex $\overline{S}$ may be considered as ``an ambient topological space for $S$''. 
As a direct consequence of the definition of a cosimplicial complex we obtain:

\begin{proposition} \label{prop:cosim0}
Let $S$ be a finite set of simplexes.\\
1) The set $S$ is a cosimplicial complex if and only if $S$ is open for $\overline{S}$. \\
2) The set $S$ is a cosimplicial complex if and only if $\underline{S}$ is a simplicial complex.
\end{proposition}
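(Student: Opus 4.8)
The plan is to prove both biconditionals by unwinding the definitions and using the characterization of $\overline{S}$ as the smallest simplicial complex containing $S$, together with the earlier Propositions.

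For part 1), I would argue directly from Definition~\ref{def:cosim0} and the definition of ``open for'' a complex. Suppose first that $S$ is a cosimplicial complex. To show $S$ is open for $\overline{S}$, take $\sig \in S$ and $\tau \in \overline{S}$ with $\sig \subseteq \tau$; I must show $\tau \in S$. Since $\tau \in \overline{S}$, there is some $\rho \in S$ with $\tau \subseteq \rho$, so we have $\sig \subseteq \tau \subseteq \rho$ with $\sig, \rho \in S$, and the cosimplicial condition forces $\tau \in S$. Conversely, suppose $S$ is open for $\overline{S}$, and take $\sig, \rho \in S$ with $\sig \subseteq \nu \subseteq \rho$; I must show $\nu \in S$. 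Since $\nu \subseteq \rho$ with $\rho \in S$, we have $\nu \in \overline{S}$ by definition of the closure, and then openness applied to $\sig \subseteq \nu$ (with $\sig \in S$, $\nu \in \overline{S}$) gives $\nu \in S$. This establishes the first equivalence cleanly.

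For part 2), the cleanest route is to invoke part 1) together with Proposition~\ref{prop:cloopen} or the remark following it. Applying part 1) with $K = \overline{S}$: $S$ is a cosimplicial complex iff $S$ is open for $\overline{S}$. By the observation in the text that $S \subseteq K$ is open for $K$ exactly when $K \setminus S$ is closed for $K$ (that is, a simplicial complex), taking $K = \overline{S}$ shows that $S$ is open for $\overline{S}$ iff $\overline{S} \setminus S$ is a simplicial complex. Since $\underline{S} = \overline{S} \setminus S$ by definition, this is precisely the statement that $\underline{S}$ is a simplicial complex, completing part 2).

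I do not anticipate a genuine obstacle here, as both parts reduce to careful bookkeeping with the inclusion relations; the only point requiring slight care is ensuring, in part 1), that every intermediate simplex $\tau$ or $\nu$ is verified to lie in $\overline{S}$ before openness can be applied, which follows immediately from the definition of the closure. The efficiency of the argument comes from reusing part 1) to obtain part 2) rather than re-deriving the equivalence from scratch.
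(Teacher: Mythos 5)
Your proof is correct and follows exactly the route the paper intends: the paper states this proposition without proof, calling it a direct consequence of the definition, and your unwinding of the definitions for part 1) plus the reuse of the open/closed complementarity for part 2) supplies precisely the routine verification being elided. No gaps.
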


If $S$ is a cosimplicial complex, then the sets $L = \underline{S}$ and $K = \overline{S}$ are two simplicial complexes such that 
$L \subseteq K$ and $S = K \setminus L$. The following proposition generalizes this situation. 

\begin{proposition} \label{prop:cosim1} Let $(L,K)$ be a pair of simplicial complexes such that $L \subseteq K$.
The cosimplicial complex $S = K \setminus L$ is such that
 $L \cup \overline{S} = K$ and $L \cap \overline{S} = \underline{S}$. 
\end{proposition}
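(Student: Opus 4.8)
The plan is to prove the two set equalities directly from the definitions of $\overline{S}$ and $\underline{S}$ specialized to $S = K \setminus L$, exploiting that both $L$ and $K$ are simplicial complexes. First I would establish $L \cup \overline{S} = K$. The inclusion $L \cup \overline{S} \subseteq K$ is routine: $L \subseteq K$ by hypothesis, and $\overline{S} \subseteq K$ because $\overline{S}$ is the smallest simplicial complex containing $S = K \setminus L \subseteq K$, while $K$ itself is a simplicial complex containing $S$. For the reverse inclusion $K \subseteq L \cup \overline{S}$, I would take any $\nu \in K$ and split on whether $\nu \in L$; if $\nu \in L$ we are done, and if $\nu \notin L$ then $\nu \in K \setminus L = S \subseteq \overline{S}$, so $\nu \in L \cup \overline{S}$. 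This handles the first equality without difficulty.

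Next I would prove $L \cap \overline{S} = \underline{S}$, where by definition $\underline{S} = \overline{S} \setminus S$. The inclusion $\underline{S} \subseteq L \cap \overline{S}$ is the more delicate direction and is where I expect the crux of the argument to lie. Take $\sigma \in \underline{S} = \overline{S} \setminus S$. By definition of $\overline{S}$ there is some $\tau \in S = K \setminus L$ with $\sigma \subseteq \tau$; in particular $\tau \in K$, and since $K$ is a simplicial complex and $\sigma \subseteq \tau$ with $\sigma \neq \emptyset$, we get $\sigma \in K$. It remains to show $\sigma \in L$. Since $\sigma \in \overline{S}$ but $\sigma \notin S = K \setminus L$, and we have just shown $\sigma \in K$, the only possibility is $\sigma \in L$. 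Thus $\sigma \in L \cap \overline{S}$.

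For the reverse inclusion $L \cap \overline{S} \subseteq \underline{S}$, let $\sigma \in L \cap \overline{S}$. Then $\sigma \in \overline{S}$, so to conclude $\sigma \in \underline{S} = \overline{S} \setminus S$ it suffices to show $\sigma \notin S$. But $\sigma \in L$ and $S = K \setminus L$ are disjoint from $L$, so $\sigma \notin S$, giving $\sigma \in \underline{S}$ as required. The main obstacle, such as it is, is purely bookkeeping: one must keep straight the distinction between membership in $S$ and membership in the closure $\overline{S}$, and use the simplicial-complex (downward-closure) property of $K$ at exactly the right moment to place $\sigma$ inside $K$ before deducing it lies in $L$. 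No topology or the earlier propositions are strictly needed here; the statement follows from elementary set manipulations once the definitions are unwound carefully.
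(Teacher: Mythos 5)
Your proof is correct and is exactly the elementary unwinding of the definitions that the paper intends (the paper in fact omits the proof, treating the proposition as an immediate consequence of the remarks on $\overline{S}$ and $\underline{S}$). Both directions of each equality check out, including the key step of using downward-closure of $K$ to place $\sigma \in K$ before concluding $\sigma \in L$.
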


Let $S$ be a cosimplicial complex and let $\nu \in S$. We write: \\
-  $\partial (\nu,S) = \{ \eta \in S \; | \; \eta \subseteq \nu$ and $dim(\eta) =dim(\nu)-1 \}$, and \\
- $\delta (\nu,S) = \{ \mu \in S \; | \; \nu \subseteq \mu$ and $dim(\mu) =dim(\nu)+1 \}$; \\
$\partial (\nu,S)$ and $\delta (\nu,S)$ are, respectively, the {\it boundary} and the {\it coboundary of $\nu$ in $S$.} \\
We observe that: \\
- the face $\nu$ is a facet of $S$ if and only if $\delta (\nu,S) = \emptyset$, \\
- the face $\nu$ is a cofacet of $S$ if and only if $\partial (\nu,S) = \emptyset$.


\subsection{Stacks and filtrations}

With the notion of a stack, we introduce some weights on a cosimplicial or a simplicial complex. 
A stack is just a monotone function that assigns values to the simplices of the complex.

Let $F$ be a map from a cosimplicial complex $S$ to $\bb{Z}$. We say that $F$ is a {\em stack on $S$} if
we have $F(\sigma) \leq F(\tau)$ whenever $\sigma,\tau \in S$ and $\sigma\subseteq \tau$.

Let $F$ be a map from a cosimplicial complex $S$ to $\bb{Z}$.
 For any $\lambda \in \bb{Z}$, we write: \\
 \hspace*{\fill}
  $F_\lambda = \{\nu \in S \; \mid \; F(\nu) \leq \lambda \}$ and
$F[\lambda] = \{\nu \in S \; \mid \; F(\nu) = \lambda \}$, \hspace*{\fill} \\
$F_\lambda$ and $F[\lambda]$ are, respectively, the {\em cut} and {\em the section of $F$ at level $\lambda$}.\\

If $F$ is a stack on a cosimplicial complex $S$, then the indexed family $(F_\lambda)_{\lambda\in\bb{Z}}$ is a \emph{filtration on $S$}. 
That is we have $F_\lambda \subseteq F_{\lambda'}$ whenever $\lambda \leq \lambda'$.
Also if $F$ is a stack on a simplicial complex, then any cut of $F$ is a simplicial complex. 
Furthermore: 

\begin{proposition} \label{prop:stack2}
If $F$ is a stack on a cosimplicial complex, 
then any cut of $F$ and any section of $F$ is a cosimplicial complex.
\end{proposition}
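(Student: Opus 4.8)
The plan is to prove both claims using the local characterization of Definition \ref{def:cosim0}, namely that a set $S$ of simplexes is a cosimplicial complex precisely when it is ``convex'' for the inclusion order: if $\sig,\tau \in S$ and $\sig \subseteq \nu \subseteq \tau$, then $\nu \in S$. So I would fix a stack $F$ on a cosimplicial complex $S$, fix an arbitrary level $\lambda$, and verify this convexity condition separately for the cut $F_\lambda$ and for the section $F[\lambda]$.

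First I would treat the cut $F_\lambda = \{\nu \in S \mid F(\nu) \leq \lambda\}$. Take $\sig, \tau \in F_\lambda$ and any $\nu$ with $\sig \subseteq \nu \subseteq \tau$. Since $S$ is a cosimplicial complex and $\sig,\tau \in S \subseteq S$, the convexity of $S$ gives $\nu \in S$. It remains to show $F(\nu) \leq \lambda$. Here I would use monotonicity of the stack together with $\nu \subseteq \tau$: because $\nu,\tau \in S$ and $\nu \subseteq \tau$, we have $F(\nu) \leq F(\tau) \leq \lambda$. Hence $\nu \in F_\lambda$, so $F_\lambda$ satisfies the local condition and is a cosimplicial complex.

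Next I would handle the section $F[\lambda] = \{\nu \in S \mid F(\nu) = \lambda\}$, which is the more delicate of the two since equality is not obviously preserved along chains. Again take $\sig,\tau \in F[\lambda]$ and $\nu$ with $\sig \subseteq \nu \subseteq \tau$; membership $\nu \in S$ follows as before. Now apply monotonicity twice, once to each inclusion: from $\sig \subseteq \nu$ we get $F(\sig) \leq F(\nu)$, and from $\nu \subseteq \tau$ we get $F(\nu) \leq F(\tau)$. Combining with $F(\sig) = F(\tau) = \lambda$ yields $\lambda = F(\sig) \leq F(\nu) \leq F(\tau) = \lambda$, which forces $F(\nu) = \lambda$, i.e.\ $\nu \in F[\lambda]$.

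The main point to be careful about—really the only obstacle—is that the convexity condition for a cosimplicial complex is a statement about triples $\sig \subseteq \nu \subseteq \tau$ where the \emph{endpoints} $\sig,\tau$ lie in the set, and one must check the \emph{interior} $\nu$ also lies in it. This is exactly why both the stack's monotonicity must be invoked on the correct side of the chain (for the cut, only the upper inclusion $\nu \subseteq \tau$ matters; for the section, both inclusions are needed to squeeze $F(\nu)$), and why the ambient membership $\nu \in S$ is supplied for free by the assumption that $S$ itself is a cosimplicial complex. Once these observations are combined, both conclusions are immediate, so no further machinery is required.
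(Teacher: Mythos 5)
Your proof is correct and is exactly the direct verification the paper has in mind (the paper states this proposition without proof, as an immediate consequence of Definition \ref{def:cosim0} and the monotonicity of a stack): check the convexity condition $\sig \subseteq \nu \subseteq \tau$ for both the cut and the section, using $F(\nu) \leq F(\tau)$ for the cut and the squeeze $F(\sig) \leq F(\nu) \leq F(\tau)$ for the section. No issues.
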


\section{Morse sequences on stacks}
\label{sec:Fsequence}

We recall the definitions of simplicial collapses and simplicial expansions~\cite{Whi39}. \\
Let $K$ be a simplicial complex and let $\sig, \tau \in K$.
The couple $(\sig,\tau)$ is a {\em free pair for~$K$}, 
if $\tau$ is the only face of $K$ that contains $\sig$.
If $(\sig,\tau)$ is a free 
pair for $K$, then the simplicial complex $L = K \setminus \{ \sig,\tau \}$
is {\em an elementary 
collapse of $K$}, and $K$ is {\em an elementary 
expansion of $L$}. 
We say that
$K$ {\em collapses onto $L$},
or that $L$ {\em expands onto $K$},
if there exists a  sequence 
$\langle K=K_0,\ldots,K_k=L \rangle$, such that
$K_i$ is an elementary collapse of $K_{i-1}$, $i \in [1,k]$.

We  recall also the definitions of perforations and fillings \cite{Whi39}. \\
Let $K,L$ be simplicial complexes.
If $\nu \in K$ is a facet of $K$ and if $L = K \setminus \{\nu \}$, we say that
$L$ is {\em an elementary perforation of $K$}, and that
$K$ is {\em an elementary filling of $L$}.

We now introduce the notion of a ``Morse sequence'' by simply considering expansions and fillings
of a simplicial complex \cite{bertrand2025morse}.

\begin{definition} \label{def:seq1}
Let $L\subseteq K$ be two simplicial complexes. A \emph{Morse sequence (from $L$ to $K$)} is a sequence
$\ms = \langle L = K_0,\ldots,K_k =K \rangle$ of simplicial complexes
such that,
for each $i \in [1,k]$, $K_i$ is either an elementary expansion or an elementary filling of $K_{i-1}$.
If $L=\emptyset$, we say that $\ms$ is a \emph{Morse sequence on $K$}.
Thus, the sequence $\ms = \langle K \rangle$ is trivially a Morse sequence on $K$.
\end{definition}

Note that a Morse sequence $\ms = \langle K_0,...,K_k \rangle$ is a filtration, that is,
for each $i \in [0,k-1]$, we have $K_i \subseteq K_{i+1}$; see \cite{Sco19,Edels13}. 

If $\ms = \langle \emptyset = K_0,...,K_k = K \rangle$ is a Morse sequence on $K$, with $k \geq 1$, we also note that
$K_1$ is necessarily a filling of $\emptyset$. 

\noindent
Let $\ms = \langle L=K_0,\ldots,K_k=K \rangle$ be a Morse sequence from $L$ to $K$.
If $k \geq 1$, we write   
$\diamond \ms$ for the sequence 
$\diamond \ms = \langle \ka_1, \ldots, \ka_k \rangle$ such that, for each $i \in [1,k]$:
\begin{itemize}[noitemsep,topsep=0pt]
\item If $K_i$ is an elementary filling of $K_{i-1}$, then $\ka_i$ is the simplex such that $K_i = K_{i-1} \cup \{\ka_i\}$; we say that
$K_i$ and the face $\ka_i$ are \emph{critical for $\ms$}. 
\item If $K_i$ is an elementary expansion of $K_{i-1}$, then
$\ka_i$  is the free pair $(\sigma,\tau)$  such that $K_i = K_{i-1} \cup \{\sigma,\tau \}$; we say that $K_i$, the pair
$\ka_i$, and the faces $\sig$, $\tau$, are \emph{regular for $\ms$}. 
\end{itemize}
If $k=0$, we write $\diamond \ms = \langle \rangle$. That is, $\diamond \ms$ is the empty sequence. \\
We say that
$\diamond \ms$ is a \emph{simplex-wise (Morse) sequence (from $L$ to $K$)}.
Note that $\diamond \ms$ is a sequence of faces and pairs.

\begin{definition} \label{def:seq2}
Let $\ms$ be a Morse sequence. 
The \emph{gradient vector field of 
$\ms$} is the set composed of all regular pairs for $\ms$.
We say that two Morse sequences $\overrightarrow{V}$ and $\ms$
from $L$ to $K$
are \emph{equivalent} if
they have the same gradient vector field.
\end{definition}


Let us consider the simplicial complex $K$ depicted Fig. \ref{fig:BasicFlooding} (a). 
For convenience, we will describe a simplex by a concatenation of its vertices. \\
Let $\ms = \langle \emptyset = K_0,...,K_7 = K \rangle$ be the sequence such that 
$K_1 = \{a\}$, $K_2 = K _1 \cup \{b,ab\}$,  $K_3 = K _2 \cup \{c,bc\}$,  $K_4 = K _3 \cup \{d,cd\}$,
 $K_5 = K _4 \cup \{e,ed\}$,  $K_6 = K _5 \cup \{be\}$,  $K_7 = K _6 \cup \{bd,bde\}$.
It can be seen that $\ms$ is a Morse sequence on $K$. The corresponding simplex-wise sequence $\diamond \ms$ 
is such that $\diamond \ms = \langle a, (b, ab), (c,bc), (d,cd),(e,ed),be,(bd,bde) \rangle$.
In Fig. \ref{fig:BasicFlooding} (b)
the gradient vector field of $\ms$ is given by arrows, the two critical faces $a$ and $be$ of $\ms$ are in red.\\

 Now, we extend the notion of a Morse sequence for an arbitrary stack $F$. \\
\medskip
\hspace*{\fill}
{\it In the sequel of this paper, $L$ and $K$ will denote simplicial complexes.}
 \hspace*{\fill}

Let $F$ be a stack on $K$ and let $L$ be a subcomplex of $K$. Let $\ka = (\sigma,\tau)$ be a free pair for~$L$.
If $F(\sigma) = F(\tau)$, then we say that
$L' = L \setminus \{\sigma,\tau \}$ is an \emph{(elementary) $F$-collapse of $L$}
and $L$ is an \emph{(elementary) $F$-expansion of $L'$}. 
We write $F(\ka) = F(\sigma) = F(\tau)$.

\begin{figure}[tb]
    \centering
     \hfill
    \begin{subfigure}{0.25\textwidth}
        \includegraphics[width=\textwidth]{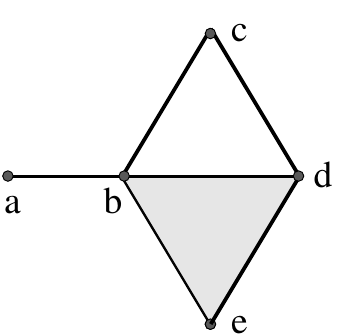}
        \caption{}
    \end{subfigure}%
    \hfill
    \begin{subfigure}{0.25\textwidth}
         \includegraphics[width=\textwidth]{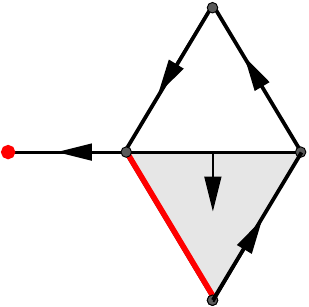}
        \caption{}
    \end{subfigure}%
     \hfill. \\
    
    \begin{subfigure}{0.25\textwidth}
         \includegraphics[width=\textwidth]{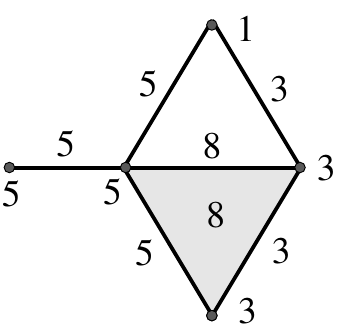}
        \caption{}
    \end{subfigure}%
    \hfill
      \begin{subfigure}{0.25\textwidth}
         \includegraphics[width=\textwidth]{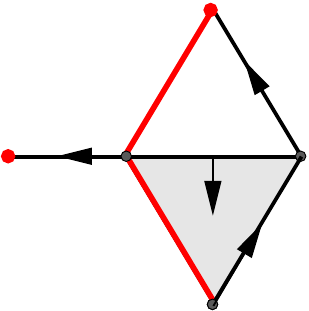}
        \caption{}
    \end{subfigure}%
     \hfill
       \begin{subfigure}{0.25\textwidth}
         \includegraphics[width=\textwidth]{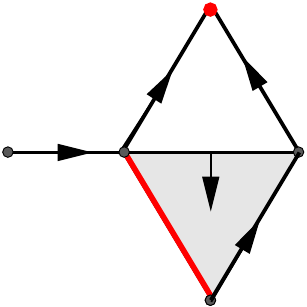}
        \caption{}
    \end{subfigure}%
     \hfill

 \caption{ (a) A simplicial complex $K$. (b) The gradient vector field of a Morse sequence $\ms$ on $K$. 
 (c) A stack $F$ on $K$. (d) The gradient vector field of a Morse sequence $\ms'$ on $F$. 
 (e) The gradient vector field of a flooding sequence $\ms''$ on~$F$.
 See text for the description of the sequences $\ms$, $\ms'$, and $\ms''$.}
  \label{fig:BasicFlooding}
\end{figure}

\begin{definition} \label{def:stack2}
Let $\ms$ be a Morse sequence from $L$ to $K$ and let $F$ be a stack on $K$. 
We say that $\ms$ is a \emph{(Morse) sequence on $F$} or {\em an $F$-sequence (from $L$ to $K$)}, if we have $F(\sig) = F(\tau)$ whenever 
$(\sig,\tau)$ is a regular pair for $\ms$.
\end{definition}

An example of a stack $F$ on a complex $K$ is given Fig. \ref{fig:BasicFlooding} (c). 
Let us consider again the Morse sequence $\ms$ and its 
gradient vector field given in $(b)$. We see that $\ms$ does not satisfy the condition of Def. \ref{def:stack2} for the stack $F$. Thus, the sequence $\ms$ 
is not an $F$-sequence. Now let us consider the sequence $\diamond \ms' = \langle a, (b, ab), c, bc, (d,cd),(e,ed), be,(bd,bde) \rangle$.
This sequence is a simplex-wise Morse sequence on $K$. Furthermore the corresponding sequence
$\ms'$ is an $F$-sequence on $K$. The gradient vector field of $\ms'$ is depicted in (d), the four critical faces $a$, $c$, $bc$, and $be$ 
of $\ms'$ are in red.
\\

Let $F$ be a stack on $K$ and $L$ be a subcomplex of $K$. 
 If $\lambda \in \mathbb{Z}$, the set $F_\lambda \cap L$ is the cut of $L$ at level $\lambda$ which is induced by $F$. 
 Note that $F_\lambda \cap K = F_\lambda$. \\
Let $(\sigma,\tau)$ be a free pair for $L$ such that $F(\sig) = F(\tau)$ . 
Then: \\
- For each $\lambda < F(\sigma)$, we have $\sigma \not\in F_\lambda \cap L$ and $\tau \not\in F_\lambda \cap L$, \\
- For each $\lambda \geq F(\sigma)$, one can verify that the pair $(\sigma,\tau)$ is a free pair for $F_\lambda \cap L$.

Now, let $\ms  = \langle L = K_0,..., K_k = K \rangle$ be an $F$-sequence. 
Suppose $K_i$ is an elementary expansion of $K_{i-1}$.
Then, by the above observation, this elementary expansion is a (possibly trivial) expansion for all cuts of $K_{i-1}$. \\
By induction, we have:

\begin{theorem} \label{prop:stack4} Let $F$ be a stack on $K$ and
 $\ms = \langle L = K_0,\ldots,K_k = K \rangle$ be an $F$-sequence.
Let $K_i$ and $K_j$, $j >i$, be two consecutive
critical complexes for  $\ms$. 
That is, $K_{i+1},...,K_{j-1}$ are regular complexes for $\ms$.
Then,  for each $\lambda \in \mathbb{Z}$, the complex $F_\lambda \cap K_{j-1}$ 
collapses onto $F_\lambda \cap K_{i}$.
\end{theorem}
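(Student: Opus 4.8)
The plan is to reduce the statement to a claim about a single elementary expansion and then propagate it along the whole subsequence by induction on $j-1-i$, exactly as the sentence ``By induction'' announces. First I would record that, since $K_{i+1},\ldots,K_{j-1}$ are all regular for $\ms$, every step $K_{m-1}\to K_m$ with $i+1 \le m \le j-1$ is an elementary expansion; and since $\ms$ is an $F$-sequence, the associated free pair $(\sig_m,\tau_m)$, which is free for the \emph{larger} complex $K_m$, satisfies $F(\sig_m)=F(\tau_m)$. I would then fix an arbitrary level $\lambda \in \mathbb{Z}$ and analyse how the cut operator $X \mapsto F_\lambda \cap X$ transforms this single step, noting first that each $F_\lambda \cap K_m$ is itself a simplicial complex (being the intersection of the two simplicial complexes $F_\lambda$ and $K_m$), so that the collapse relation is well posed throughout.

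The core of the argument is the observation stated just before the theorem, applied with $L := K_m$ and the free pair $(\sig_m,\tau_m)$. From $K_m = K_{m-1}\cup\{\sig_m,\tau_m\}$ I get the commutation $F_\lambda \cap K_m = (F_\lambda \cap K_{m-1}) \cup (F_\lambda \cap \{\sig_m,\tau_m\})$, and then split into two cases. If $\lambda < F(\sig_m)$, the observation gives $\sig_m,\tau_m \notin F_\lambda$, so $F_\lambda \cap K_m = F_\lambda \cap K_{m-1}$ and the cut is unchanged. If $\lambda \geq F(\sig_m)$, the observation guarantees that $(\sig_m,\tau_m)$ is a free pair for $F_\lambda \cap K_m$, and since $\sig_m,\tau_m \in F_\lambda$ we have $(F_\lambda \cap K_m)\setminus\{\sig_m,\tau_m\} = F_\lambda \cap K_{m-1}$; hence $F_\lambda \cap K_m$ collapses onto $F_\lambda \cap K_{m-1}$ by a single elementary collapse. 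In both cases $F_\lambda \cap K_m$ collapses onto $F_\lambda \cap K_{m-1}$, where in the trivial case ``collapses onto'' is read through the zero-length collapse sequence.

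To conclude, I would chain these single-step collapses. By transitivity of the collapse relation — concatenating the elementary collapse sequences and discarding the trivial (equality) steps — the complex $F_\lambda \cap K_{j-1}$ collapses onto $F_\lambda \cap K_i$. Equivalently, one runs an induction on the number $j-1-i$ of expansion steps, the base case being the empty subsequence ($F_\lambda \cap K_i$ collapses onto itself) and the inductive step being exactly the single-step analysis above.

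The argument carries essentially no hard obstacle, since the preceding observation already supplies the decisive fact that a free pair survives as a free pair after cutting at any level at or above its value. The only points requiring care are bookkeeping ones: applying the observation with the correct roles (the pair $(\sig_m,\tau_m)$ is free for the larger complex $K_m$, not for $K_{m-1}$), verifying the commutation of intersection with both the union $K_{m-1}\cup\{\sig_m,\tau_m\}$ and the removal $\setminus\{\sig_m,\tau_m\}$, and treating the degenerate steps $\lambda < F(\sig_m)$ uniformly so that the collapse relation composes cleanly.
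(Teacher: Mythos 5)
Your proposal is correct and follows exactly the paper's own route: the observation preceding the theorem (a free pair $(\sigma,\tau)$ with $F(\sigma)=F(\tau)$ either disappears from the cut $F_\lambda$ when $\lambda<F(\sigma)$ or remains a free pair for $F_\lambda\cap L$ when $\lambda\ge F(\sigma)$) is applied to each regular step and then chained by induction over the run of expansions from $K_i$ to $K_{j-1}$. The paper only sketches this ("by the above observation... by induction"), and your write-up is a faithful, careful expansion of that same argument, including the correct bookkeeping that the pair is free for the larger complex $K_m$.
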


This last property may be seen as an extension to stacks of a fundamental theorem, called \emph{the collapse theorem}, 
which makes the link between the basic definitions of  discrete Morse theory
and discrete homotopy (See Theorem 3.3 of \cite{For98a} and Theorem 4.27 of \cite{Sco19}).

\begin{remark}
Let $\ms = \langle \emptyset = K_0,..., K_k = K \rangle$
be an $F$-sequence. Then $\ms$ induces a ``double filtration'': 
the indexed family $(K_i)_{i \in [0,k]}$ is a  filtration where each $K_i$ induces the filtration $(F_\lambda \cap K_i)_{\lambda \in \mathbb{Z}}$.  
Also, $\ms$ induces the sequence $\langle F_0,..., F_k \rangle$ where each $F_i$ is the stack on $K_i$ which is the restriction
of $F$ to $K_i$.  
\end{remark}

If $S$ is a set, we write   $\mathds{1}_S$ for the constant function $S \rightarrow \mathbb{Z}$
such that, for each $x \in S$, we have $\mathds{1}_S(x) = 1$. 
We observe that a sequence $\ms$  is a Morse sequence from $L$ to $K$ if and only if $\ms$ is an $F$-sequence from $L$ to $K$ for $F = \mathds{1}_K$.\\

It is well known that discrete Morse theory can be formulated from the perspective of gradient vector fields~\cite{koz20}. In the remainder of this section, we present a natural extension of this classical notion to stacks. We then emphasize that a Morse sequence is in no way a restrictive notion with respect to this point of view.
That is, any gradient vector field on a stack $F$ is the gradient vector field of some Morse sequence on $F$.

Let $F$ be a stack on $K$ and let 
$V$ be a set of pairs $(\sigma,\tau)$, with $\sigma, \tau \in K$, such that
$\sigma \in \partial \tau$ and $F(\sigma) = F(\tau)$.
We say that $V$ is a  \emph{(discrete) vector field on $F$}
if each simplex of $K$ is in at most one pair of $V$.
A simplex $\nu \in K$ is said to be \emph{critical for $V$} if it does not belong to any pair in $V$.

Let $V$ be a vector field on $F$.
A \emph{($p$-)gradient path in $V$ (from $\sig_0$ to $\sig_k$)}
is a sequence
$\pi = \langle \sig_0,\tau_0,\sig_1,\tau_1,...,\sig_{k-1},\tau_{k-1}, \sigma_{k}\rangle$,
with $k \geq 0$, composed of faces 
such that, for all $i \in [0,k-1]$, $dim(\sig_i) =dim(\sig_k)= p$,
$(\sig_i,\tau_{i})$ is in~$V$,
$\sigma_{i+1} \subset \tau_i$, and $\sig_{i+1} \not= \sig_i$.
This sequence $\pi$ is said to be \emph{trivial} if $k =0$, that is, if 
$\pi = \langle \sig_0 \rangle$; otherwise, if $k \geq 1$, we say that $\pi$ is \emph{non-trivial}.
Also, the sequence $\pi$ is \emph{closed} if $\sig_0 = \sig_k$.
We say that a vector field $V$ on $F$ is \emph{a gradient vector field} if $V$ is \emph{acyclic}, that is if $V$ contains
no non-trivial closed $p$-gradient path.

Note that, if $F$ is the constant map $\mathds{1}_K$, then the above definitions correspond exactly to the 
classical ones.
The proof of the following result is a direct extension of the proof given in~\cite{bertrand2025morse} (Theorem 51).

\begin{theorem} \label{th:DVFS} Let $F$ be a stack on $K$.
A vector field $V$ on $F$ is a gradient vector field if and only if $V$ is the gradient vector field of a
Morse sequence on $F$.
\end{theorem}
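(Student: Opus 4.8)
One direction is immediate from the definitions: if $\ms$ is a Morse sequence on $F$ and $V$ is its gradient vector field (the set of all regular pairs), then by Definition \ref{def:stack2} each pair $(\sig,\tau) \in V$ satisfies $F(\sig)=F(\tau)$, and since each elementary expansion introduces a fresh pair of simplexes, every simplex lies in at most one pair. Thus $V$ is a vector field on $F$. The only substantive point here is acyclicity. I would argue that the filtration structure forces acyclicity: because $\ms$ is a nested sequence and each regular pair $(\sig_i,\tau_i)$ enters the sequence at a definite step, any $p$-gradient path must respect the order in which faces appear. A nontrivial closed path would require returning to a face already ``used up'' earlier in the ordering, contradicting that $\sigma_{i+1}\subset\tau_i$ pairs a lower face against the top of a later-or-equal expansion. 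I would formalize this by assigning to each regular simplex its index in $\ms$ and showing the path indices strictly decrease, ruling out a cycle.

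\medskip
The harder and more interesting direction is the converse: given an acyclic vector field $V$ on $F$, construct a Morse sequence $\ms$ on $F$ whose gradient vector field is exactly $V$. The excerpt tells us the proof is ``a direct extension of the proof given in~\cite{bertrand2025morse} (Theorem 51),'' so the strategy is to adapt the constant-weight construction to the stratified setting defined by the stack. My plan is to build $\ms$ one sublevel set at a time, processing the levels $\lambda \in \mathbb{Z}$ in increasing order. Within each section $F[\lambda]$, the restriction of $V$ is an ordinary acyclic vector field on a set of equal-valued simplexes, so I can invoke the classical Theorem 51 argument there. The key is that because $V$ pairs only simplexes of equal $F$-value, the vector field decomposes level by level, and assembling the per-level Morse sequences in increasing $\lambda$-order yields a global sequence whose expansions each satisfy $F(\sig)=F(\tau)$ by construction.

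\medskip
The technical heart of the classical construction, which I would reproduce at each level, uses acyclicity to topologically sort the gradient paths: since $V$ restricted to $F[\lambda]$ has no nontrivial closed paths, the reachability relation induced by $V$ is a partial order, and one extracts a linear order on the paired simplexes that is compatible with the relation $\sigma_{i+1}\subset\tau_i$. Reading off the regular pairs in this sorted order produces a sequence of elementary expansions, and the simplexes of $F[\lambda]$ not covered by $V$ are inserted as elementary fillings (critical faces). The main obstacle I anticipate is verifying that the expansions remain \emph{elementary} throughout—that is, when a pair $(\sig,\tau)$ is added, $\tau$ is genuinely a free coface of $\sig$ in the complex built so far. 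This requires the topological-sort order to guarantee that all proper cofaces of $\sig$ other than $\tau$ have already been incorporated (as critical faces or as tops of earlier pairs) before $(\sig,\tau)$ is processed; acyclicity of $V$ is exactly what makes such an order exist. I would handle this by induction on the number of pairs, checking at each insertion that the free-pair condition of Definition \ref{def:seq1} holds, and confirming that the resulting $\ms$ has gradient vector field equal to $V$ and is an $F$-sequence.
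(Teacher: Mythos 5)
Your overall architecture is reasonable and consistent with what the paper intends: the paper itself gives no argument here beyond citing Theorem 51 of the Morse-sequences paper, and your level-by-level reduction (restrict $V$ to each section $F[\lambda]$, realize the restriction by a Morse sequence from $F_{\lambda-1}$ to $F_\lambda$, concatenate in increasing $\lambda$) is exactly the decomposition the paper later formalizes for flooding sequences in Proposition~\ref{prop:par1}. The forward direction, formalized by showing that the insertion indices strictly decrease along any gradient path, is correct.

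There is, however, a concrete error at what you yourself call the technical heart of the converse. For $K_i = K_{i-1} \cup \{\sigma,\tau\}$ to be an elementary expansion, what must already be present in $K_{i-1}$ is every proper face of $\tau$ other than $\sigma$ (so that $K_i$ is a simplicial complex); the freeness of the pair is then automatic, since $K_{i-1}$ is a simplicial complex not containing $\sigma$ and hence contains no coface of $\sigma$. Your stated condition --- that all proper cofaces of $\sigma$ other than $\tau$ have already been incorporated before $(\sigma,\tau)$ is processed --- is generally unsatisfiable: if any proper coface of $\sigma$ were already in $K_{i-1}$, closedness would force $\sigma \in K_{i-1}$, contradicting that $\sigma$ enters only at step $i$. (What you wrote is the correct condition for the dual, collapsing construction read from $K$ down to $L$; you have mixed the two directions.) The fix is local: the linear order you need is a linear extension of the Hasse diagram of $F[\lambda]$ with edges oriented from cofaces to faces except that the edges of $V$ are reversed; acyclicity of this digraph is equivalent to acyclicity of $V$, and reading the pairs and critical simplexes in such an order makes each insertion either a coperforation or a coreduction of the remaining part of $F[\lambda]$, hence a filling or an expansion of the complex built so far (Proposition~\ref{prop:cosim3}). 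A secondary point to make explicit: $F[\lambda]$ is a cosimplicial complex, not a simplicial complex, so you must invoke the relative form of the classical realization theorem (a Morse sequence from $\underline{S}$ to $\overline{S}$ with $S=F[\lambda]$), not the absolute one.
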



 \section{Flooding sequences}
 \label{sec:flood}

 We introduce the following refinement of an $F$-sequence.
 A flooding sequence corresponds to a process similar to the one often used for computing the watershed transform, which is a key tool for image segmentation in Mathematical Morphology
 \cite{beucher1979}, \cite{BM93}.

 \begin{definition} \label{def:cut0}
Let $F$ be a stack on $K$ and let $\ms = \langle \emptyset = K_0,..., K_k = K \rangle$ be an $F$-sequence. 
We say that $\ms$ is {\em a flooding sequence (on $F$)}, if we have $F(\sigma) \leq F(\tau)$ whenever $\sigma \in K_i$, $\tau \in K_j \setminus K_i$, and $i < j$.
\end{definition}

\begin{figure*}[tb]
    \centering
    \begin{subfigure}[t]{0.45\textwidth}
        \centering
        \includegraphics[width=.99\textwidth]{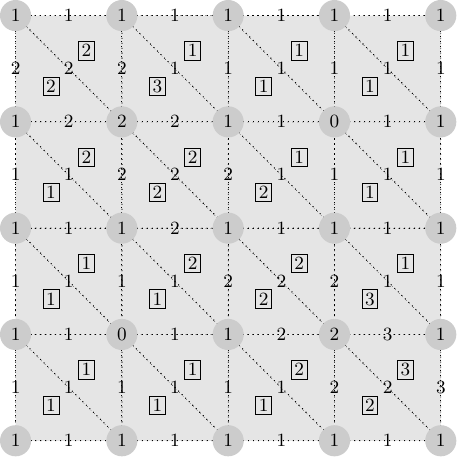}
        \caption{}
    \end{subfigure}%
    ~~~~~~~~~~
    \begin{subfigure}[t]{0.45\textwidth}
        \centering
         \includegraphics[width=.99\textwidth]{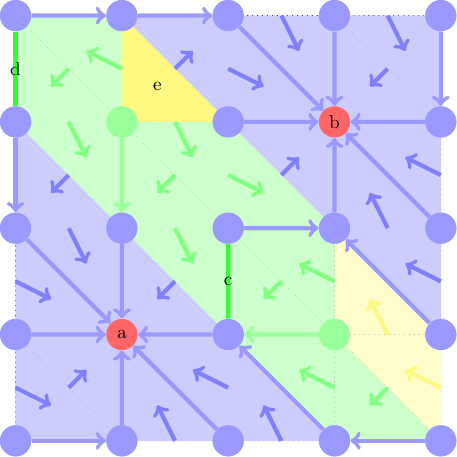}
        \caption{}
    \end{subfigure}%

 \caption{A flooding sequence.  (a) a simplicial stack $F$ on a triangulation $K$ of a square. (b) A flooding sequence on $F$. }
  \label{fig:FloodingSequence}
\end{figure*}

In  the example of Fig. \ref{fig:BasicFlooding}, it can be seen that the sequence $\ms'$ is not a flooding sequence on $F$. The first element of $\ms'$ is the 
face $a$. In a flooding sequence, the face $c$ should appear before $a$ since $F(a) = 5$ and $F(c)= 1$.
 Now, let us consider the sequence $\diamond \ms'' = \langle c, (d,cd),(e,ed), (b,bc), (a, ab), be,(bd,bde) \rangle$.
The corresponding sequence $\ms''$ is a flooding sequence on $F$. The gradient vector field of $\ms''$ is given in (e), the two critical faces $c$ and $be$ 
of $\ms''$ are in red.

Figure~\ref{fig:FloodingSequence} illustrates an example of a flooding sequence on a less simple complex. 
This sequence $\ms$ can begin from two possible points, both at level 0, as shown in Figure~\ref{fig:FloodingSequence}.a. 
If $\ms$ begins with \textbf{a}, then the second element of $\ms$ must be \textbf{b},
both of them are critical for $\ms$.
Then the sequence must continue with a subsequence composed of all simplexes at level 1. This subsequence is illustrated in blue, it turns out that it is composed 
solely of regular elements. The next subsequence composed of all simplexes at level 2 is illustrated in green, it contains two critical simplexes \textbf{c}
and \textbf{d}. The last subsequence composed of all simplexes at level 3 is illustrated in yellow, it contains one critical simplex \textbf{e}. 
See also Figure 1 in \cite{BN25} which gives, on the same complex, two examples of $F$-sequences that are not flooding sequences.

Let $F$ be a stack on $K$ and let $L$ be a subcomplex of $K$. We say that $L$
is \emph{a subcomplex for $F$} if, for all $\sigma \in K$ and all $\tau \in L$,
we have $\sigma \in L$ whenever $F(\sigma) < F(\tau)$. 
The following provides two characterizations of flooding sequences.

\begin{proposition} \label{prop:cut1}
Let $F$ be a stack on $K$ and let $\ms = \langle \emptyset = K_0,..., K_k = K \rangle$ be an $F$-sequence. \\
1) The sequence $\ms$ is a flooding sequence on $F$ if and only if, for each $i \in [0,k]$, 
the complex $K_i$ is a subcomplex for $F$. \\
2) The sequence $\ms$ is a flooding sequence on $F$ if and only if, for each $\lambda \in \bb{Z}$, there exists $i \in [0,k]$ such that $F_\lambda = K_i$.
\end{proposition}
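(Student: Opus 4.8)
The plan is to reduce everything to a single bookkeeping device: for each face $\nu \in K$, let $t(\nu)$ be the unique step at which $\nu$ enters the sequence, that is, the smallest index $i$ with $\nu \in K_i$. This is well defined because a Morse sequence is a filtration, so $\nu \in K_i$ if and only if $t(\nu) \le i$, and $K_i = \{\nu \in K : t(\nu) \le i\}$. With this notation the flooding condition of Definition~\ref{def:cut0} reads: $t(\sigma) < t(\tau)$ implies $F(\sigma) \le F(\tau)$; equivalently (renaming the faces), its contrapositive form $F(\sigma) < F(\tau) \Rightarrow t(\sigma) \le t(\tau)$ is the one I will use most. I will also repeatedly invoke that $\ms$ is an $F$-sequence: a step introducing a regular pair introduces two faces of \emph{equal} $F$-value, and a step introducing a critical face introduces a single face; hence two faces with distinct $F$-values always have distinct values of $t$.

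For statement~1, both implications are short. For the forward direction, assume $\ms$ floods, fix $i$, and take $\sigma \in K$, $\tau \in K_i$ with $F(\sigma) < F(\tau)$; the contrapositive form gives $t(\sigma) \le t(\tau) \le i$, so $\sigma \in K_i$, proving $K_i$ is a subcomplex for $F$. For the converse, assume every $K_i$ is a subcomplex for $F$ and suppose, for contradiction, that $t(\sigma) < t(\tau)$ yet $F(\tau) < F(\sigma)$; applying the subcomplex property of $K_{t(\sigma)}$ (which contains $\sigma$) to the face $\tau$ of smaller value forces $\tau \in K_{t(\sigma)}$, i.e. $t(\tau) \le t(\sigma)$, a contradiction.

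For statement~2, the converse is again immediate: if every cut is some $K_i$, and $t(\sigma) < t(\tau)$ while $F(\sigma) > F(\tau)$, then with $\lambda = F(\tau)$ we have $\tau \in F_\lambda = K_m$ but $\sigma \notin F_\lambda = K_m$, giving $t(\tau) \le m < t(\sigma)$, contradicting $t(\sigma) < t(\tau)$. The main work, and the step I expect to be the real obstacle, is the forward direction: showing each cut $F_\lambda$ actually occurs as some $K_i$. The idea is that flooding forces all faces of value $\le \lambda$ to enter strictly before all faces of value $> \lambda$, so the sequence pauses exactly at $F_\lambda$. Concretely I would set $i = \max\{t(\nu) : F(\nu) \le \lambda\}$ (with $i = 0$ when this set is empty) and prove $K_i = F_\lambda$. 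The inclusion $F_\lambda \subseteq K_i$ holds by the definition of the maximum; for $K_i \subseteq F_\lambda$, I take $\nu \in K_i$ and a maximiser $\nu^*$ with $t(\nu^*) = i$, $F(\nu^*) \le \lambda$, and rule out $F(\nu) > \lambda$: the contrapositive form would give $t(\nu^*) \le t(\nu) \le i = t(\nu^*)$, hence $t(\nu) = t(\nu^*)$, and since faces entering at the same step either coincide or form an equal-valued regular pair, we would conclude $F(\nu) = F(\nu^*) \le \lambda$, a contradiction.

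The delicate point throughout statement~2 is precisely this ``no straddling a level'' phenomenon: a single expansion step can only add an equal-valued pair, so no step simultaneously introduces a face of value $\le \lambda$ and a face of value $> \lambda$. This is exactly where the $F$-sequence hypothesis (rather than a general Morse sequence) is used, and it is what guarantees that the growing complex lands exactly on the cut $F_\lambda$ instead of overshooting within a level.
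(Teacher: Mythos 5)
The paper states this proposition without proof, so there is no argument of the author's to compare against; your proof is correct and complete. Your reading of Definition~\ref{def:cut0} via entry times $t(\nu)$ (the first index at which a face appears) is the intended one, and all four implications go through as you write them; in particular, you correctly isolate the only delicate point, namely that in the forward direction of part~2 the $F$-sequence hypothesis is what prevents a single expansion step from straddling a level, since a regular pair must carry equal $F$-values.
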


The next lemma will allow us to transform $F$-sequences into flooding sequences
by means of an exchange property.

\begin{lemma} \label{prop:stack4}
Let $F$ be a stack on $K$, and let 
$\diamond \ms = \langle \ka_1, \ldots, \ka_i,\ka_{i+1},\ldots ,\ka_k \rangle$, 
with $k \geq 2$, be a simplex-wise $F$-sequence. We consider the sequence 
$\diamond \ms' = \langle \ka'_1, \ldots, \ka'_i,\ka'_{i+1},\ldots ,\ka'_k \rangle$, 
such that $\ka'_i = \ka_{i+1}$, $\ka'_{i+1} = \ka_i$, and $\ka'_j = \ka_{j}$ for all $j \not=i$, $j\not=i+1$.
If $F(\ka_i) > F(\ka_{i+1})$, with  $1 \leq i \leq k-1$, then $\diamond \ms'$ is a simplex-wise $F$-sequence.
\end{lemma}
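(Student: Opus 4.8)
The plan is to prove the lemma by an exchange argument: swapping two adjacent terms across a descent ($F(\ka_i) > F(\ka_{i+1})$) should leave every expansion/filling condition intact, because the two operations are ``independent'' in the face poset. First I would compare the complexes underlying the two sequences. Writing $K_j$ (resp.\ $K'_j$) for the complex obtained after the first $j$ terms of $\diamond \ms$ (resp.\ $\diamond \ms'$), and reading each $\ka_m$ as its set of one or two simplices, we have $K_j = L \cup \bigcup_{m \leq j} \ka_m$ and similarly for $K'_j$. Since $\ka'_m = \ka_m$ for $m \notin \{i,i+1\}$, it follows that $K'_j = K_j$ for every $j \neq i$, while $K'_i = K_{i-1} \cup \ka_{i+1}$. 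Hence all steps of $\diamond \ms'$ except the $i$-th and the $(i+1)$-th are literally steps of $\diamond \ms$, and are already valid elementary $F$-operations; moreover the faces and their $F$-values are unchanged, so the $F$-compatibility of Definition~\ref{def:stack2} is automatic. It remains only to check the two steps $K_{i-1} \to K'_i$ (via $\ka_{i+1}$) and $K'_i \to K_{i+1}$ (via $\ka_i$).

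The single fact driving both verifications is a consequence of monotonicity: since $F(\ka_i) > F(\ka_{i+1})$, every face of $\ka_i$ has strictly larger value than every face of $\ka_{i+1}$, so no face $\eta \in \ka_i$ can satisfy $\eta \subseteq \mu$ for a face $\mu \in \ka_{i+1}$. Equivalently, no face of $\ka_{i+1}$ is a coface of a face of $\ka_i$, and in particular $\ka_i \cap \ka_{i+1} = \emptyset$.

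For the step $K_{i-1} \to K'_i$, I would argue that deleting the faces of $\ka_i$ from $K_{i+1}$ cannot invalidate $\ka_{i+1}$: the proper faces required to perform $\ka_{i+1}$ — all proper faces of the filled facet, or all proper faces of $\tau'$ other than $\sig'$ when $\ka_{i+1} = (\sig',\tau')$ — carry value $\leq F(\ka_{i+1}) < F(\ka_i)$, hence lie in $K_i$ but not in $\ka_i$, i.e.\ in $K_{i-1}$; and the maximality of a filled facet, as well as the uniqueness clause in the definition of a free pair, can only be preserved on passing to the smaller complex $K'_i \subseteq K_{i+1}$. For the step $K'_i \to K_{i+1}$, I would argue dually that inserting the faces of $\ka_{i+1}$ before $\ka_i$ cannot invalidate $\ka_i$: the operation $\ka_i$ is valid on $K_{i-1}$, and $K'_i$ exceeds $K_{i-1}$ only by the faces of $\ka_{i+1}$, none of which is a coface of a face of $\ka_i$; therefore they neither destroy the maximality of a filled facet of $\ka_i$ nor create a second coface of the free face of $\ka_i$, so $\ka_i$ remains a valid elementary operation on $K'_i$. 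Assembling the two steps shows that $\diamond \ms'$ is a simplex-wise $F$-sequence.

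The main obstacle I anticipate is the bookkeeping in the two free-pair sub-cases: one must confirm that $K'_i$ is genuinely a simplicial complex after a free pair is inserted out of order, and that the uniqueness of the coface in the definition of a free pair survives both the deletion (step $i$) and the insertion (step $i+1$). Both reduce to the single inequality above, so the difficulty lies in phrasing the coface/subface arguments cleanly rather than in any substantial content.
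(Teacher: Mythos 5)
Your proof is correct and follows essentially the same route as the paper's: an exchange argument in which the hypothesis $F(\ka_i) > F(\ka_{i+1})$, combined with the monotonicity of $F$, rules out any incidence between the simplices of $\ka_i$ and those of $\ka_{i+1}$, so the two elementary operations commute. The paper merely organizes this as an explicit four-way case analysis (critical/regular for each of $\ka_i$, $\ka_{i+1}$), whereas you establish the ``no incidence'' fact once and verify the two swapped steps uniformly.
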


\begin{proof} We write $\ms = \langle \emptyset = K_0,..., K_k = K \rangle$ for the Morse sequence 
corresponding to $\diamond \ms$, and $\ms' = \langle \emptyset = K'_0,..., K'_k = K \rangle$
for the one corresponding to $\diamond \ms'$.
We consider the different cases for $\ka_i$ and $\ka_{i+1}$:
\begin{enumerate} [topsep=0cm]
\item Suppose $\ka_i$ is a critical simplex; that is,  $K_i$ a filling of $K_{i-1}$.  
\begin{enumerate} [topsep=0cm]
 \item Suppose $\ka_{i+1}$ is a critical simplex; that is $K_{i+1}$ is a filling of $K_{i}$. \\
- If $\ka_{i} \not\in \partial(\ka_{i+1})$, then we see that $K'_i = K_{i-1} \cup\{\ka_{i+1}\}$ is a filling of $K_{i-1}$, and 
$K_{i+1} = K'_i \cup \{ \ka_{i} \}$ is a filling of $K'_i$. Thus $\diamond \ms'$ is a simplex-wise $F$-sequence. \\
- If $\ka_{i} \in \partial(\ka_{i+1})$, then $\ms'$ is not a filtration. But, in this case, 
we have  $F(\ka_{i}) \leq F(\ka_{i+1})$ since $F$ is a stack on $K$. 
\item Suppose $\ka_{i+1} = (\sigma, \tau)$; that is $K_{i+1}$ is an expansion of $K_{i}$. The pair
$(\sigma, \tau)$ is a free pair for $F$, we have $F(\ka_{i+1}) = F(\sigma) = F(\tau)$.\\
- If $\ka_{i} \not\in \partial(\tau)$, then 
$K'_i = K_{i-1} \cup\{\sigma,\tau\}$ is an expansion of $K_{i-1}$, and 
$K_{i+1} = K'_i \cup \{ \ka_{i} \}$ is a filling of $K'_i$. Thus $\diamond \ms'$ is a simplex-wise $F$-sequence.  \\
- If $\ka_{i} \in \partial(\tau)$, we have  $F(\ka_{i}) \leq F(\tau) = F(\ka_{i+1})$ since $F$ is a stack on $K$. 
\end{enumerate}

\item Suppose $\ka_{i} = (\sigma, \tau)$; that is $K_i$ is an expansion of $K_{i-1}$. 
\begin{enumerate} [topsep=0cm]
\item Suppose $\ka_{i+1}$ is a critical simplex; that is  $K_{i+1}$ is a filling of $K_{i}$. We observe that we cannot have
$\tau \in \partial(\ka_{i+1})$, otherwise there would exist a face $\nu \in K_{i}$, with $\nu \in \partial(\ka_{i+1})$, $\nu \not= \tau$,
such that $\sigma$ is a proper face of~$\nu$. \\
- If $\sigma \not\in \partial(\ka_{i+1})$, then $\diamond \ms'$ is a simplex-wise $F$-sequence. \\
- If $\sigma \in \partial(\ka_{i+1})$, then $F(\ka_{i}) = F(\sigma) \leq F(\ka_{i+1})$ since $F$ is a stack on $K$. 
\item Suppose $\ka_{i+1} = (\sigma', \tau')$; that is $K_{i+1}$ is an expansion of $K_{i}$. \\
- If $\sigma \not\in \partial(\tau')$ and $\tau \not\in \partial(\tau')$, then $\diamond \ms'$ is a simplex-wise $F$-sequence. \\
- If $\sigma \in \partial(\tau')$ or $\tau \in \partial(\tau')$, we have  $F(\ka_{i}) \leq F(\tau') = F(\ka_{i+1})$ since $F$ is a stack on~$K$.
 \hspace*{\fill} $\Box$
\end{enumerate}

\end{enumerate}
\end{proof}

An $F$-sequence is not necessary  a flooding sequence. 
Nevertheless we have the following result which indicates that the gradient vector field of an arbitrary Morse sequence on $F$
may be obtained by computing a flooding sequence on $F$.

\begin{theorem} \label{th:stack4}
Let $F$ be a stack on $K$, and let $\ms$ be an $F$-sequence from $\emptyset$ to $K$. 
Then, there exists a flooding sequence on $F$ which is equivalent to $\ms$.
\end{theorem}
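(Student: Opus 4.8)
The plan is to use Lemma~\ref{prop:stack4} as a sorting primitive. The key observation is that a flooding sequence is precisely an $F$-sequence whose associated simplex-wise sequence $\diamond\ms = \langle \ka_1,\ldots,\ka_k \rangle$ is sorted in nondecreasing order of $F$-value, i.e. $F(\ka_i) \leq F(\ka_{i+1})$ for all $i$. Indeed, if the sequence is so sorted, then whenever $\sigma \in K_i$ and $\tau \in K_j$ with $i<j$, the simplex $\sigma$ was introduced by some $\ka_a$ with $a\leq i$ and $\tau$ by some $\ka_b$ with $b>i$ (in fact $b \geq$ the step introducing $\tau$), so $F(\sigma) = F(\ka_a) \leq F(\ka_b) = F(\tau)$, matching Definition~\ref{def:cut0}. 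Conversely, a flooding sequence forces the values to be nondecreasing step by step. So the theorem reduces to showing that $\diamond\ms$ can be rearranged into a sorted simplex-wise $F$-sequence without changing the gradient vector field.

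First I would make the reduction precise: establish that $\diamond\ms$ is sorted (nondecreasing in $F$) if and only if $\ms$ is a flooding sequence, citing the characterization just discussed or Proposition~\ref{prop:cut1}. Then the strategy is a bubble sort. Lemma~\ref{prop:stack4} says that whenever two adjacent entries $\ka_i,\ka_{i+1}$ are out of order, that is $F(\ka_i) > F(\ka_{i+1})$, we may transpose them and still obtain a valid simplex-wise $F$-sequence $\diamond\ms'$. I would argue that repeatedly applying such adjacent transpositions to out-of-order pairs terminates in a sorted sequence: a standard measure such as the number of inversions (pairs $(a,b)$ with $a<b$ but $F(\ka_a) > F(\ka_b)$) strictly decreases with each swap, and is bounded below by zero, so the process halts, and it can only halt when no adjacent inversion remains, which by transitivity of $\leq$ on $\bb{Z}$ means the whole sequence is sorted.

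The remaining point is that each transposition preserves the gradient vector field, so the final sorted sequence is equivalent to $\ms$ in the sense of Definition~\ref{def:seq2}. This follows immediately from the form of the swap in Lemma~\ref{prop:stack4}: the operation only permutes the entries $\ka_1,\ldots,\ka_k$, leaving the multiset of critical simplexes and regular pairs unchanged; in particular the set of regular pairs, which is exactly the gradient vector field, is identical before and after. Since equivalence is transitive and each intermediate sequence shares the same gradient vector field as $\ms$, the sorted sequence does too. Collecting these observations, the sorted simplex-wise $F$-sequence corresponds to a flooding sequence on $F$ equivalent to $\ms$.

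I expect the main obstacle to be the bookkeeping in the reduction step rather than the sorting itself: one must confirm that the ``sorted'' condition on $\diamond\ms$ is genuinely equivalent to the pointwise condition ``$F(\sigma)\leq F(\tau)$ whenever $\sigma\in K_i$, $\tau\in K_j$, $i<j$'' of Definition~\ref{def:cut0}, handling the subtlety that an expansion step introduces two simplexes of possibly different dimension but equal $F$-value (since a free pair for $F$ satisfies $F(\sigma)=F(\tau)$), so that assigning the single value $F(\ka_i)$ to the whole step is well defined. Once that equivalence is nailed down, the termination-of-bubble-sort argument and the gradient-field invariance are both routine, and the theorem follows.
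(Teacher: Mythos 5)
Your proposal is correct and follows essentially the same route as the paper: the paper's proof is exactly a bubble sort of $\diamond\ms$ using Lemma~\ref{prop:stack4} to swap adjacent out-of-order entries, with the observation that swaps preserve the set of regular pairs and hence the gradient vector field. Your version just spells out the details the paper leaves implicit (termination via the inversion count, the equivalence of ``sorted'' with the flooding condition, and the well-definedness of $F(\ka_i)$ for expansion steps).
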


\begin{proof} If  $\ms$ is not a flooding sequence, then there exist two elements $\ka_i$ and $\ka_{i+1}$
in $\diamond \ms$ such that $F(\ka_i) > F(\ka_{i+1})$. By Lemma \ref{prop:stack4}, we can
swap these elements.  Note that this swap corresponds exactly to the swap of the bubble sorting algorithm \cite{cormen01introduction}.
By repeating this operation, we obtain a sorted sequence
 $\overrightarrow{V}$  which is equivalent to $\ms$, and which is a flooding  sequence.
$\Box$
\end{proof}

\begin{remark}
By Lemma \ref{prop:stack4}, a flooding sequence may be obtained from an arbitrary $F$-sequence from $\emptyset$ to $K$ with a sorting algorithm.
In order to effectively obtain a Morse sequence, it can be seen that it is sufficient for the sorting algorithm to be \emph{stable}. 
That is, equal elements must appear in the same order in the sorted sequence as they do in the original $F$-sequence \cite{cormen01introduction}.
\end{remark}

We conclude this section by illustrating
Theorem  \ref{th:stack4} with the sequence $\ms'$ given  Fig. \ref{fig:BasicFlooding}. 
We have $\diamond \ms' = \langle a, (b, ab), c, bc, (d,cd),(e,ed), be,(bd,bde) \rangle$.
By iteratively applying  Lemma \ref{prop:stack4}, we see that we can derive the simplex-wise sequence 
$\diamond \ms''' = \langle c, (d,cd),(e,ed), a, (b, ab), bc, be,(bd,bde) \rangle$:
the corresponding sequence $\ms'''$ is an $F$-sequence which is equivalent to $\ms'$. 
Furthermore, $\ms'''$ is a flooding sequence on $F$.


\section{Flooding sequences and cosimplicial complexes}
\label{sec:floodcosim}

Let $F$ be a stack on $K$ and let  $\ms = \langle \emptyset = K_0,..., K_k = K \rangle$ be an $F$-sequence. 
Let $\diamond \ms = \langle \ka_1, \ldots ,\ka_k \rangle$. 
For each $\lambda \in \bb{Z}$ we write $\diamond \ms [\lambda]$ for the subsequence of $\diamond \ms$ composed 
of all $\ka_i$ such that $F(\ka_i) = \lambda$. 
Recall that a  subsequence is a {\it substring} if it a sequence of
contiguous elements of the original sequence.
If the sequence $\ms$ is a flooding sequence then, for each $\lambda \in \bb{Z}$, the sequence $\diamond \ms [\lambda]$ is a substring of
$\diamond \ms$ which is a simplex-wise Morse sequence from $F_{\lambda-1}$ to $F_\lambda$. 
The set $S = F_{\lambda} \setminus F_{\lambda-1}$ is a cosimplicial complex. From Proposition \ref{prop:cosim1},
we obtain:

\begin{proposition} \label{prop:cosim2} Let $(L,K)$ be a pair of simplicial complexes such that $L \subseteq K$,
and let $S = K \setminus L$. A sequence $\ms$ is a Morse sequence from $L$ to $K$ if and only if
$\ms$ is a Morse sequence from $\underline{S}$ to $\overline{S}$.
\end{proposition}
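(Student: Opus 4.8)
The plan is to exhibit a simple correspondence between the two sequences and to verify that it preserves the validity of every elementary operation. Given the simplex-wise data $\diamond \ms$, I would run two families of complexes in parallel: the complexes $K_i$ of a sequence from $L$ to $K$, and the truncated complexes $K^*_i := K_i \cap \overline{S}$; the two are mutually recoverable, since $L \cup K^*_i = K_i$. Using Proposition \ref{prop:cosim1}, the endpoints match: $K_0 = L$ corresponds to $K^*_0 = L \cap \overline{S} = \underline{S}$, and $K_k = K$ corresponds to $K^*_k = K \cap \overline{S} = \overline{S}$. Each $K^*_i$ is a simplicial complex, being the intersection of two simplicial complexes, and along both families one adds exactly the simplexes of $S$ in the same order ($K \setminus L = S$ in one case, $\overline{S} \setminus \underline{S} = S$ in the other). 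The maps $M \mapsto M \cap \overline{S}$ and $M^* \mapsto L \cup M^*$ are mutually inverse on the relevant complexes, so the argument will be fully symmetric and handle both implications at once.

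The heart of the argument is a local lemma on cofaces. I would prove: for any set $M$ with $L \subseteq M \subseteq K$ and any $\nu \in S$, the cofaces of $\nu$ in $M$ coincide with the cofaces of $\nu$ in $M \cap \overline{S}$. The only nontrivial inclusion is that every coface $\mu \supseteq \nu$ lying in $M$ already lies in $\overline{S}$. Indeed, $\mu \in K = L \cup \overline{S}$ by Proposition \ref{prop:cosim1}; were $\mu$ in $L$, then the downward-closedness of the simplicial complex $L$ together with $\nu \subseteq \mu$ would force $\nu \in L$, contradicting $\nu \in S = K \setminus L$. Hence $\mu \in \overline{S}$. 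One may equivalently invoke the fact that $S$ is open for $\overline{S}$, i.e. upward-closed within its ambient complex.

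From this lemma everything follows, because both elementary operations are governed solely by the cofaces of the simplex being added. A simplex $\nu \in S$ is a facet of $K_i$ iff it has no proper coface in $K_i$, which by the lemma is the same as having no proper coface in $K^*_i$, so fillings transfer in both directions. Likewise, a pair $(\sigma,\tau)$ with $\sigma,\tau \in S$ is a free pair for $K_i$ precisely when $\tau$ is the unique proper coface of $\sigma$ in $K_i$ — a condition involving only the cofaces of $\sigma$, and which already forces $\tau$ to be a facet of the correct dimension — so by the lemma it holds in $K_i$ iff it holds in $K^*_i$, and expansions transfer as well. Running through the sequence index by index then yields the stated equivalence.

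I expect the main obstacle, and indeed essentially the only substantive point, to be this local lemma: recognizing that the facet and free-pair conditions are entirely \emph{upward-looking}, so that intersecting with the ambient complex $\overline{S}$ leaves them undisturbed, while the crucial input is that $L$ is downward-closed and $S$ is open for $\overline{S}$. Everything else is routine bookkeeping about unions and intersections with $\overline{S}$.
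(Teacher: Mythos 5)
Your proof is correct, and it follows the route the paper intends: the paper states this proposition as a direct consequence of Proposition~\ref{prop:cosim1} without writing out the details, and your correspondence $K_i \leftrightarrow K_i \cap \overline{S}$ together with the local coface lemma (any coface of $\nu \in S$ in $K$ avoids $L$ by downward-closedness of $L$, hence lies in $\overline{S}$) is precisely the argument that derivation requires. The only points you compress --- that $K_i \cap \overline{S}$ and $L \cup S_i$ are automatically simplicial complexes, so the downward conditions transfer for free --- are indeed routine.
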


Let $S$ be a cosimplicial complex. 
If $\overrightarrow{V}$ is a Morse sequence from $\underline{S}$ to~$\overline{S}$, then we say that
$\overrightarrow{V}$ and
$\diamond \overrightarrow{V}$ are \emph{Morse sequences on $S$}.

If $F$ is a stack on $K$, recall that $F [\lambda]  = \{\nu \in K \; \mid \; F(\nu) = \lambda \}$;
 we have $F [\lambda] = F_{\lambda} \setminus F_{\lambda-1}$.
We derive the following result from the previous observations.

\begin{proposition} \label{prop:par1} 
Let $F$ be a stack on $K$ and let  $\ms = \langle \emptyset = K_0,..., K_k = K \rangle$ be an $F$-sequence. Let $[0,H]$ be the range of $F$.
The sequence $\ms$ is a flooding sequence if and only if 
$\diamond \ms$ is a concatenation of the form $\diamond \ms = \diamond \ms [0]$ $... $ $\diamond \ms [\lambda]$ $...$ $\diamond \ms [H]$ 
where, for each $\lambda \in [0,H]$, $\diamond \ms [\lambda]$ is a Morse sequence on $F [\lambda]$. 
\end{proposition}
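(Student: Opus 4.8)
The plan is to prove both implications by reducing the displayed ``block'' form to the single statement that the values $F(\ka_1),\ldots,F(\ka_k)$ are non-decreasing along $\diamond \ms$, and then to identify each value-block with a Morse sequence on the corresponding section via Proposition~\ref{prop:cosim2}. Observe first that, since the subsequences $(\diamond \ms[\lambda])_{\lambda \in [0,H]}$ partition $\diamond \ms$ according to the value of $F$, the assertion ``$\diamond \ms = \diamond \ms[0]\cdots \diamond \ms[H]$'' is equivalent to saying that every element of value $\lambda$ precedes every element of value $\lambda'$ whenever $\lambda < \lambda'$; that is, each $\diamond \ms[\lambda]$ is a \emph{substring} of $\diamond \ms$ and the values are non-decreasing. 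I will use this reformulation throughout.

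For the forward implication, suppose $\ms$ is a flooding sequence. I would first invoke the observation recorded just before the statement: for each $\lambda$, the subsequence $\diamond \ms[\lambda]$ is a substring of $\diamond \ms$ which is a Morse sequence from $F_{\lambda-1}$ to $F_\lambda$. Since these substrings exhaust $\diamond \ms$ and are ordered by increasing $\lambda$, their concatenation is exactly $\diamond \ms$, which yields the required block form. It then remains to identify a Morse sequence from $F_{\lambda-1}$ to $F_\lambda$ with a Morse sequence on the section $F[\lambda]$. For this I set $S = F[\lambda] = F_\lambda \setminus F_{\lambda-1}$; by Proposition~\ref{prop:cosim2} applied to the pair $(F_{\lambda-1},F_\lambda)$, a sequence is a Morse sequence from $F_{\lambda-1}$ to $F_\lambda$ if and only if it is a Morse sequence from $\underline{S}$ to $\overline{S}$, i.e.\ a Morse sequence on $F[\lambda]$. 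This closes the forward direction.

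For the converse, assume $\diamond \ms$ has the block form with each $\diamond \ms[\lambda]$ a Morse sequence on $F[\lambda]$. The cleanest route is to prove, by induction on $\lambda$, that the complex reached at the end of block $\lambda$ is exactly the cut $F_\lambda$, and then to conclude with the cut characterisation of flooding sequences, Proposition~\ref{prop:cut1}(2). Starting from $\emptyset = F_{-1}$, and assuming the complex before block $\lambda$ equals $F_{\lambda-1}$, the block adjoins precisely the faces of $S = F[\lambda]$ (which is disjoint from $F_{\lambda-1}$); using the identity $F_{\lambda-1}\cup \overline{S} = F_\lambda$ from Proposition~\ref{prop:cosim1}, the complex reached is $F_{\lambda-1}\cup F[\lambda] = F_\lambda$. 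Hence every cut $F_\lambda$ occurs as some $K_i$, and Proposition~\ref{prop:cut1}(2) gives that $\ms$ is a flooding sequence.

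I expect the main obstacle to lie in the converse, specifically in checking that the abstract operations recorded in $\diamond \ms[\lambda]$ — which are valid fillings and expansions relative to the local complex $\overline{F[\lambda]}$ — remain valid fillings and expansions when performed inside $\ms$ on the strictly larger complex obtained from $F_{\lambda-1}$ by adjoining the faces already added in the block. This is where the stack hypothesis is essential: any coface of a newly added face of value $\lambda$ has value $\geq \lambda$ and therefore cannot lie in $F_{\lambda-1}$, so the set of cofaces present is the same in both contexts, and hence facets stay facets and free pairs stay free. Once this compatibility is recorded, the bookkeeping of cuts and the translation through Propositions~\ref{prop:cosim1} and~\ref{prop:cosim2} are routine, and both implications follow.
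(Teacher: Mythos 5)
Your proof is correct and follows essentially the same route as the paper, which derives the proposition directly from the substring observation preceding it together with Propositions~\ref{prop:cosim2} and~\ref{prop:cut1}. The compatibility worry in your final paragraph is in fact moot here: since $\ms$ is assumed to be an $F$-sequence, all its expansions and fillings are already valid by hypothesis, so the converse reduces to the ordering bookkeeping you carry out via the cuts.
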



\begin{definition} \label{def:contraction}
 Let $S$ be a cosimplicial complex and let $\sigma,\tau,\nu \in S$, with $\sigma \in \partial \tau$.
 \begin{itemize}[topsep=0cm] 
\item If $\delta(\sigma, S) = \{ \tau \}$, we say that $S' = S \setminus \{ \sigma, \tau \}$ is a \emph{reduction of $S$},
\item If $\delta(\nu,S) = \emptyset$, we say that $S' = S \setminus \{ \nu \}$ is a \emph{perforation of $S$},
\item If $\partial(\tau,S) = \{ \sigma \}$, we say that $S' = S \setminus \{ \sigma, \tau \}$ is a \emph{coreduction of $S$}, 
\item If $\partial(\nu,S) = \emptyset$, we say that $S' = S \setminus \{ \nu \}$ is a \emph{coperforation of $S$}.
 \end{itemize}
 In each of the above four cases, we say that $S'$ is a \emph{contraction of $S$}.
 \end{definition}

 \noindent
 Reductions and coreductions have been introduced in \cite{Mro09} in the context of $\mathcal{S}$-complexes. In our context, 
 the following property is crucial for an efficient computation of Morse sequences.

\begin{proposition} \label{prop:cosim3} Let $S$ be a cosimplicial complex and let $\sigma,\tau,\nu \in S$. 
\begin{itemize}[topsep=0cm] 
\item $\overline{S} \setminus \{ \sigma, \tau \}$ is a collapse of $\overline{S}$ iff $S \setminus \{ \sigma, \tau \}$ is a reduction of $S$. 
\item $\overline{S} \setminus \{ \nu \}$ is a perforation of $\overline{S}$ iff $S \setminus \{ \nu \}$ is a perforation of $S$.
\item $\underline{S} \cup \{ \sigma, \tau \}$ is an expansion of $\underline{S}$ iff $S \setminus \{ \sigma, \tau \}$ is a coreduction of $S$. 
\item $\underline{S} \cup \{ \nu \}$ is a filling of $\underline{S}$ iff $S \setminus \{ \nu \}$ is a coperforation of $S$. 
 \end{itemize}
\end{proposition}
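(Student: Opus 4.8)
The plan is to handle the four bullets with two common structural facts and a single combinatorial observation, rather than to re-derive each case independently. The two facts are: (A) by Proposition~\ref{prop:cosim0}, $S$ is open for $\overline{S}$, so every coface in $\overline{S}$ of a face $\sigma \in S$ already lies in $S$; in particular $\delta(\sigma,S)=\delta(\sigma,\overline{S})$ for every $\sigma\in S$, and no element of $\underline{S}$ can contain a face of $S$. (B) Again by Proposition~\ref{prop:cosim0}, both $\overline{S}$ and $\underline{S}=\overline{S}\setminus S$ are simplicial complexes, and $S\cap\underline{S}=\emptyset$. The combinatorial observation is that, inside a simplex $\rho$, any proper face of codimension at least $2$ is contained in at least two distinct codimension-$1$ faces of $\rho$; consequently, in a simplicial complex, a face with a unique codimension-$1$ coface has no coface of higher dimension, and a face is a facet exactly when its coboundary is empty.

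First I would dispatch the two ``closure'' bullets (reduction/collapse and perforation/perforation) together. For a free pair $(\sigma,\tau)$ of $\overline{S}$ I would rewrite the freeness condition as $\delta(\sigma,\overline{S})=\{\tau\}$ using the combinatorial observation (uniqueness of the codimension-$1$ coface rules out any higher coface), and likewise rewrite ``$\nu$ is a facet of $\overline{S}$'' as $\delta(\nu,\overline{S})=\emptyset$. Fact (A) then transports both conditions verbatim to $S$, since $\delta(\sigma,S)=\delta(\sigma,\overline{S})$ and $\delta(\nu,S)=\delta(\nu,\overline{S})$; this yields exactly the reduction and perforation conditions of Definition~\ref{def:contraction}, establishing both equivalences.

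The substance of the proposition lies in the two ``coclosure'' bullets. For the forward direction of the coreduction case I would assume $\partial(\tau,S)=\{\sigma\}$ and show that $\underline{S}\cup\{\sigma,\tau\}$ is a genuine simplicial complex in which $(\sigma,\tau)$ is a free pair. The freeness is immediate: by (A) nothing in $\underline{S}$ contains $\sigma$, so $\tau$ is the unique proper coface of $\sigma$ in $\underline{S}\cup\{\sigma,\tau\}$. The real work is down-closedness, namely that every proper face of $\tau$ other than $\sigma$ lies in $\underline{S}$; I would argue by contradiction. If some $\eta\subsetneq\tau$ with $\eta\neq\sigma$ were in $S$, then for each codimension-$1$ face $\mu$ of $\tau$ with $\eta\subseteq\mu\subseteq\tau$ the local closedness of $S$ (Definition~\ref{def:cosim0}) would force $\mu\in S$, hence $\mu\in\partial(\tau,S)$. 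When $\eta$ has codimension $\geq 2$ in $\tau$ there are at least two such $\mu$ by the combinatorial observation, and when $\eta$ is itself codimension-$1$ we may take $\mu=\eta\neq\sigma$; either way $\partial(\tau,S)$ would contain an element distinct from $\sigma$, contradicting $\partial(\tau,S)=\{\sigma\}$. The coperforation case is the same argument applied to the single face $\nu$ under the condition $\partial(\nu,S)=\emptyset$. For the reverse directions I would simply note that a codimension-$1$ face of $\tau$ (resp.\ of $\nu$) lying in $S$ must, since $\underline{S}\cup\{\sigma,\tau\}$ (resp.\ $\underline{S}\cup\{\nu\}$) is a simplicial complex, lie in $\underline{S}$, contradicting $S\cap\underline{S}=\emptyset$ unless that face is $\sigma$ (resp.\ unless no such face exists).

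I expect the main obstacle to be exactly this forward coreduction/coperforation step: the hypotheses control only the \emph{codimension-$1$} boundary $\partial(\tau,S)$, whereas forming a simplicial complex requires absorbing \emph{all} lower faces of $\tau$ into $\underline{S}$. The local-closedness counting argument above is what bridges this gap, and it is the one place where the full strength of the cosimplicial (locally closed) hypothesis on $S$—and not merely its openness for $\overline{S}$—is genuinely needed.
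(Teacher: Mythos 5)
The paper states Proposition~\ref{prop:cosim3} without proof, so there is no argument of the author's to compare against; your proposal supplies a correct and complete one. The two closure bullets follow, as you say, directly from the openness of $S$ in $\overline{S}$ (Proposition~\ref{prop:cosim0}), which gives $\delta(\cdot,S)=\delta(\cdot,\overline{S})$ on $S$, together with the standard observation that in a simplicial complex a unique (resp.\ absent) codimension-$1$ coface forces the absence of all higher cofaces. Your treatment of the coreduction/coperforation bullets is the genuinely nontrivial part and is handled correctly: the local-closedness of $S$ (Definition~\ref{def:cosim0}) propagates any face $\eta\in S$ with $\eta\subsetneq\tau$ up to a codimension-$1$ face of $\tau$ lying in $S$ (two of them when $\eta$ has codimension at least $2$), which is exactly what is needed to conclude that all remaining proper faces of $\tau$ sit in $\underline{S}$, so that $\underline{S}\cup\{\sigma,\tau\}$ is indeed a simplicial complex; the converse directions via $S\cap\underline{S}=\emptyset$ are likewise sound.
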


We note that a contraction of a cosimplicial complex is also a cosimplicial complex. Thus, the following definition makes sense.

Let $S$ be a cosimplicial complex. A \emph{ Morse cosequence on $S$} is a sequence $\overrightarrow{U} = \langle S = S_0,\ldots,S_k = \emptyset \rangle$ 
such that, for each $i \in [1,k]$,  $S_i$ is a contraction of $S_{i-1}$. The sequence is
increasing (resp. decreasing) if each $S_i$ is either a coreduction or a coperforation of $S_{i-1}$
(resp. a reduction or a perforation of $S_{i-1}$). 

Let $L\subseteq K$ be two simplicial complexes. Let
$\ms = \langle L = K_0,\ldots,K_k =K \rangle$ be a sequence of nested complexes. From Propositions \ref{prop:cosim2} and \ref{prop:cosim3}, we obtain: \\
Let $\overrightarrow{U} = \langle S_0,\ldots,S_k \rangle$ with $S_i = K \setminus K_{i}$.
Then $\ms$ is a Morse sequence if and only if  $\overrightarrow{U}$ is an increasing Morse cosequence on $K \setminus L$. \\
Let $\overrightarrow{U} = \langle S_0,\ldots,S_k \rangle$ with $S_i = K_{k-i} \setminus L$.
Then $\ms$ is a Morse sequence if and only if  $\overrightarrow{U}$ is a decreasing Morse cosequence on $K \setminus L$.

\section{Computing a flooding sequence}
\label{sec:compflood}

We have seen that a Morse sequence $\ms$ from $L$ to $K$ may be obtained by computing a Morse sequence $\diamond \overrightarrow{V}$
on the set $S = K \setminus L$. 
The sequence $\diamond  \overrightarrow{V}$ can be constructed from the left to the right through iterative coreductions and coperforations of the set $S$. 
Alternatively, $\diamond  \overrightarrow{V}$ can be constructed from  the right to the left through iterative reductions and perforations of $S$. 
 
Scheme \ref{Sch1} allows us to compute a Morse sequence $\diamond  \overrightarrow{V}$ on an arbitrary cosimplicial complex $S$
by considering the two previous constructions.
At each step of the scheme, the set $S'$ is a contraction of $S$.

\begin{algorithm2e*}[h]
\renewcommand{\algorithmcfname}{Scheme}%
\KwData{A cosimplicial complex $S$ \\}
\KwResult{ {\bf Sequence}$(S)$, which is a simplex-wise Morse sequence on $S$.}
\SetKwFor{WhileBy}{while}{do one of the following}{end while}
\SetKwData{Kw}{Extract $\ka$ such that: \\
$\; \;$ -  either $\ka = (\sig,\tau)$, with $\sig \subset \tau$, and $S' = S \setminus \{ \sigma, \tau \}$ is a reduction or a coreduction of $S$, \\
$\; \;$ - or $\ka = \nu$ and $S' = S \setminus \{ \nu \}$ is a perforation or a coperforation of $S$.}
\SetKwFor{Red}{reduce}{}{end red}
$\overrightarrow{L} := \langle \rangle$; $\overrightarrow{R} := \langle \rangle$\;
\While{$S \not= \emptyset$}{
\Kw \\
\lIf{$S'$ is a coreduction or a coperforation of $S$}{$\overrightarrow{L} := \overrightarrow{L} \cdot \ka$}  
\lIf{$S'$ is a reduction or a perforation of $S$}{$\overrightarrow{R} := \ka \cdot \overrightarrow{R}$}
$S := S'$
}
$\diamond  \overrightarrow{V} := \overrightarrow{L} \cdot \overrightarrow{R}$\;
\Return{$\diamond  \overrightarrow{V}$}

\caption{ 
{\bf Sequence}$(S)$} 
    \label{Sch1}
\end{algorithm2e*}

We consider three cases of Scheme \ref{Sch1}: 
\begin{enumerate} [topsep=0cm]
\item If $\diamond  \overrightarrow{V}$ is constructed using only coreductions and coperforations, then the scheme is an
\emph{increasing scheme}. We say that $\diamond  \overrightarrow{V}$ is a \emph{maximal sequence} if we make a coperforation only when no coreduction can be made. 
\item If $\diamond  \overrightarrow{V}$ is constructed using only reductions and perforations, then the scheme is a 
\emph{decreasing scheme}. We say that $\diamond  \overrightarrow{V}$ is a \emph{minimal sequence} if we make a perforation only when no reduction can be made. 
\item Otherwise we say that the scheme is an \emph{interleaved scheme}. We say that $\diamond  \overrightarrow{V}$ is a \emph{min/max sequence} if we make a perforation or a coperforation
only when no reduction and no coreduction can be made. \\
\end{enumerate}

Let us illustrate these three schemes with the complex $K$ of Figure \ref{fig:BasicFlooding} (a). 
\begin{enumerate} [topsep=0cm]
\item The Morse sequence $\ms$ corresponding to Figure \ref{fig:BasicFlooding} (b)
is such that: \\
\hspace*{\fill}  $\diamond \ms = \langle a, (b, ab), (c,bc), (d,cd),(e,ed),be,(bd,bde) \rangle$. \hspace*{\fill} \\
This sequence is a maximal sequence. It can be built from the left to the right with Scheme \ref{Sch1}.
Under this ordering, no coreduction can be made when the two coperforations $a$ and $be$ are done. 
\item Let $\ms'''$ be the Morse sequence such that: \\
\hspace*{\fill}  $\diamond \ms''' = \langle b, (c,bc), (d,cd),(e,ed),be,(a,ab), (bd,bde) \rangle$. \hspace*{\fill} \\
This sequence is a minimal sequence. It can be built from the right to the left with Scheme \ref{Sch1}.
Under this ordering, no reduction can be made when the two perforations $be$ and $b$ are done. 
Note that $\ms$ is not a minimal sequence and $\ms'''$ is not a maximal sequence.
\item For a min/max sequence, at the second step of Scheme \ref{Sch1}, we must have either $\overrightarrow{L} = \langle \rangle$ and
 $\overrightarrow{R} = \langle (a,ab),(bd,bde) \rangle$ or $\overrightarrow{L} = \langle \rangle$ and
 $\overrightarrow{R} = \langle (bd,bde),(a,ab) \rangle$.
 At the third step, we can choose  to make the coperforation $b$. 
 In this case, we must make three coreductions at the following steps.
 Thus, at line 9 of Scheme \ref{Sch1}, we may obtain: \\
 \hspace*{\fill} $\overrightarrow{L} = \langle b, (c,bc), (d,cd),(e,ed),be \rangle$ and
 $\overrightarrow{R} = \langle (a,ab), (bd,bde) \rangle$ \hspace*{\fill}. \\
We retrieve the sequence $\ms'''$ after the concatenation of $\overrightarrow{L}$ and $\overrightarrow{R}$.\\
\end{enumerate}

The purpose of maximal, minimal, and min/max sequences
is to try to minimize the number of critical
simplexes.  This problem is, in general, NP-hard \cite{Jos06}.
Therefore, these sequences do not, in general, yield optimal results. 
Note also that different numbers of critical simplexes may be obtained by
these three types of sequences,
see Figure 11 of \cite{fugacci2019computing} and Figure 4 of \cite{Bertrand2023MorseSequences} for examples which illustrate this fact.

 The computational models of increasing and decreasing schemes are classical models for extracting gradient vector fields, see
 \cite{Ben14}, \cite{Mro09}, \cite{Har14}, \cite{fugacci2019computing}. Also the interleaved model is introduced in  \cite{fugacci2019computing}.
 A distinctive feature of our approach is the use of these schemes to compute Morse sequences. 
A Morse sequence on a complex $K$ not only defines a gradient vector field on $K$, but also provides a specific structure on $K$.

With Proposition \ref{prop:par1}, we derive Scheme \ref{Sch2} which is a direct extension of Scheme \ref{Sch1} for computing flooding sequences. 
Given a stack $F$ on $K$, we consider the different sections $F [\lambda] $ of $F$ at level $\lambda$.
 For each  $\lambda$,
a Morse sequence $\diamond  \overrightarrow{W} [\lambda]$ on $F [\lambda] $ is computed thanks to 
Scheme \ref{Sch1}. Then, the sequences $\diamond  \overrightarrow{W} [\lambda]$ are simply concatenated in order 
to obtain a flooding sequence on $F$.

\SetKwFor{ForBy}{for}{do in parallel}{end for}

\begin{algorithm2e*}[h]
\renewcommand{\algorithmcfname}{Scheme}%
\KwData{A simplicial complex $K$ and a stack $F$ on $K$; the range of $F$ is $[0,H]$. \\
We set
$F[\lambda] := \{\nu \in K \; | \; F(\nu) = \lambda \}$.
\\}
\KwResult{{\bf Flood$(F)$}, which is a simplex-wise flooding sequence on $F$.}

\ForBy{$\lambda \in[0,H]$}{
 $\diamond  \overrightarrow{W} [\lambda]:= \; ${\bf Sequence}$(F [\lambda])$}

$\diamond \overrightarrow{W}  := \langle \rangle$; $\lambda :=0$\;
\While{$\lambda \leq H$}
{
$\diamond \overrightarrow{W}  := \diamond \overrightarrow{W} \; \cdot \; \diamond \overrightarrow{W} [\lambda]$; \\
$\lambda := \lambda+1$
}

\Return{$\diamond \overrightarrow{W}$}

\caption{{\bf Flood}($F$) }\label{Sch2}
\end{algorithm2e*}

With the three above instances of Scheme \ref{Sch1}, we obtain three instances of Scheme \ref{Sch2}
which allow to compute maximal, minimal, and min/max flooding sequences.

Let us consider again the stack $F$ given  Figure \ref{fig:BasicFlooding} (c).
We have: 

\hspace*{\fill}  $F[1] = \{ c\}$, $F[3] = \{d,e,de, cd\}$, $F[5] = \{a,b,ab,bc,be\}$, $F[8] = \{bd,bde\}$. \hspace*{\fill} \\
With Scheme~\ref{Sch1}, we may obtain the following maximal sequences: 

- {\bf Sequence}$(F [1]) = \langle c \rangle$,

- {\bf Sequence}$(F [3]) = \langle (d,cd),(e,ed) \rangle$,

- {\bf Sequence}$(F [5]) = \langle (b,bc), (a, ab), be \rangle$,

- {\bf Sequence}$(F [8]) = \langle (bd,bde) \rangle$.\\
Thus, with Scheme \ref{Sch2}, we may obtain precisely the flooding sequence $\diamond \ms'' $
which is illustrated Figure \ref{fig:BasicFlooding} (e).
Observe that, in this simple case, we obtain an {\em optimal $F$-sequence}, that is, an $F$-sequence with the smallest number of critical faces. 
Observe also that, by Theorem \ref{th:stack4}, an optimal $F$-sequence may always be obtained by considering solely flooding sequences.

With Scheme \ref{Sch1}, it can be seen that a 
maximal, minimal, or  min/max sequence $\diamond  \overrightarrow{W} [\lambda]$ of Scheme \ref{Sch2} may be obtained with a time complexity $\mathcal{O}(d N_\lambda)$,
where $d$ is the dimension of the complex $K$, and $N_\lambda$  is the number of simplexes of $F [\lambda] $, see~\cite{BN25}.
Therefore, since the sets $F [\lambda]$ are disjoint, all the sequences $\diamond  \overrightarrow{W} [\lambda]$ can be computed with a time complexity $\mathcal{O}(d N)$,
where $N$  is the number of simplexes of $K$.
Now, if each sequence is stored as a list, and if the range of $F$ is $[0,H]$, it can be seen that the concatenations may be done with complexity $\mathcal{O}(H)$.  Since $H$ is bounded by $N$, we obtain $\mathcal{O}(d N)$ for computing a flooding sequence with a sequential algorithm.

Also the computations of all sequences may be done in parallel. 
In the best case, that is, if the values of the simplexes are equally distributed, we achieve $\mathcal{O}(\frac{d N}{H} + H)$
 for computing the flooding sequence.

\section{Conclusion}

This work extends the framework of Morse sequences to stacks on simplicial complexes, introducing a flexible and expressive model for discrete Morse theory. By incorporating stack-based constraints, we define flooding sequences—a refined version of Morse sequences that respect monotonic ordering, reflecting processes commonly observed in watershed algorithms.

We have shown that the gradient vector field arising from any Morse sequence on a stack can be recovered by computing an equivalent flooding sequence, underscoring the generality of the flooding model. 
We also proposed general construction schemes, based on cosimplicial complexes, that provide efficient methods for computing flooding sequences.
It should be noted that all the notions introduced in the paper may be directly adapted to more general complexes, for example to cubical complexes.

Importantly, a Morse sequence encodes richer structural information about a complex than a gradient field alone. In particular,  
the Morse complex of an object can be easily derived via the reference map introduced in~\cite{bertrand2025morse}, requiring only a linear scan of the sequence.

It is also worth emphasizing that a stack is not necessarily an injective function—different faces in the complex may share the same value. This contrasts with some approaches that require distinct values for each face \cite{fugacci2019computing,robins2011theory}. Allowing arbitrary stacks makes the model more suitable for real-world data, where such constraints are typically not satisfied \cite{ROCCA2025101299}.

Overall, the Morse sequence framework opens promising directions for future research, including extensions to persistent homology as well as applications in image segmentation and topological data analysis.\\

{\bf Acknowledgements}
The author wishes to express his thanks to Laurent Najman
for his active interest in this paper and for stimulating conversations.


\bibliographystyle{splncs04}
\bibliography{biblio}

\end{document}